\documentclass[ reprint, aps, pra, superscriptaddress, longbibliography, nofootinbib ]{revtex4-2}
\usepackage{amsmath,amsthm,amssymb}

\usepackage{xcolor}
\usepackage[citecolor=blue,linkcolor=blue,urlcolor=blue]{hyperref}
\usepackage{physics}
\usepackage{bm}
\usepackage{tikz}
\usepackage{algpseudocode}
\usepackage{lipsum}

\usepackage{multirow}
\usepackage{times}

\newtheorem{theorem}{Theorem}
\newtheorem{lemma}{Lemma}
\newtheorem{example}{Example}
\newtheorem{corollary}{Corollary}

\newcounter{algorithm}
\renewcommand{\thealgorithm}{\arabic{algorithm}}

\makeatletter
\renewcommand*\env@matrix[1][*\c@MaxMatrixCols c]{%
\hskip -\arraycolsep \let\@ifnextchar\new@ifnextchar \array{#1}}
\makeatother
\begin{document}
\setcounter{MaxMatrixCols}{20}

\makeatletter
\def\selectlanguage#1{}
\makeatother

\algrenewcommand\algorithmicrequire{{\bf{Input:}}}
\algrenewcommand\algorithmicensure{{\bf{Output:}}}

\title{Magic Gate Teleportation: Structure, Useful Resource States, and Simpler Feedforward}

\author{Yunzhe Zheng}
\email{yunzhe.zheng@yale.edu}
\affiliation{Yale Quantum Institute \& Department of Applied Physics, Yale University, New Haven, CT, USA}
\author{Allen Zang}
\affiliation{Pritzker School of Molecular Engineering, University of Chicago, Chicago, IL, USA}
\author{Aleksander Kubica}
\email{a.kubica@yale.edu}
\affiliation{Yale Quantum Institute \& Department of Applied Physics, Yale University, New Haven, CT, USA}

\begin{abstract}
Quantum gate teleportation is a key technique in fault-tolerant quantum computation that uses resource states to implement logical gates.
Here, we develop a theory of quantum gate teleportation protocols that implement non-Clifford gates on arbitrary input states without revealing any information about them; we refer to these protocols as magic gate teleportation (MGT).
We uncover a hidden structure within MGT---after backpropagating the Pauli measurements, MGT protocols can be viewed as encoding the input state into a stabilizer code heralded by the measurement outcomes, followed by a logical non-Clifford gate.
Using this structure, we construct MGT protocols for any resource state obtained by applying commuting Pauli rotations to a stabilizer state, and provide an efficient algorithm for synthesizing their circuit implementations.
Conversely, we prove that useful resource states for MGT, i.e., states that can be used for non-Clifford gates through MGT protocols, are necessarily Clifford-equivalent to diagonal states;
in particular, the output state distilled from the $[\![5, 1, 3]\!]$ protocol is not useful for MGT.
Finally, we identify conditions under which the feedforward operators can be implemented by Pauli operators, shedding light on the paradigm of algorithmic fault tolerance and simplifying the feedforward operations needed for quantum computing.
\end{abstract}
\maketitle

\section{Introduction}

Quantum computers are required to be implemented with fault-tolerant logical operations~\cite{Shor1996, Preskill1998} in order to reliably execute quantum algorithm~\cite{nielsenQuantumComputationQuantum2010,Dalzell2025}.
One of the simplest ways to implement such operations is via transversal gates.
Many quantum error-correcting (QEC) codes~\cite{Shor1995,Steane1996,Gottesman1996} admit transversal Clifford gates; however, universality requires the ability to implement fault-tolerant non-Clifford gates.
Unfortunately, there is a no-go theorem~\cite{zeng_transversality_2007, eastin_restrictions_2009} that rules out the possibility of a universal set of transversal gates for any nontrivial QEC codes; similar restrictions also apply to approximate QEC codes~\cite{Faist2020,Kubica2021}.
Moreover, non-Clifford gates are typically harder to realize, justifying the common assumption that, at the logical level, Clifford operations are straightforward to implement and the main resource cost lies in the implementation of non-Clifford operations~\cite{aaronson_improved_2004,fowler_surface_2012,Campbell2017}.

A common approach to universality is to implement non-Clifford gates via quantum gate teleportation~\cite{zhou_methodology_2000,gottesman_demonstrating_1999} using resource states that are often referred to as magic states.
The paradigmatic example is the $T$ gate teleportation~\cite{zhou_methodology_2000, bravyi_universal_2005}, which uses the $\ket T$ state and, together with Clifford gates, enables universal quantum computation.
More generally, any unitary can be (approximately) expressed using Clifford and $T$ gates, and then compiled into the preparation of $\ket 0$ and $\ket T$ states, followed by multi-qubit Pauli measurements
(possibly conditioned on previous measurement outcomes)~\cite{litinski_game_2019}.
This, in fact, is one of the canonical approaches to fault-tolerant computation with quantum low-density parity-check codes~\cite{horsman_surface_2012,cohen_low-overhead_2022,he_extractors_2025}.

The $\ket T$ state is only one example of a resource state for non-Clifford gate; other resource states can also be useful for implementing different gates via quantum gate teleportation.
From the perspective of resource theories, non-stabilizer states should be viewed as computational resources~\cite{veitchResourceTheoryStabilizer2014,howard_application_2017,heinrich_robustness_2019,beverland_lower_2020,wangEfficientlyComputableBounds2020,leone_stabilizer_2022}, and non-stabilizer states other than $\ket T$ states may still give rise to useful non-Clifford gates.
However, it remains unclear how one can start with generic non-stabilizer states and consume them for non-Clifford gates through quantum gate teleportation.
In particular, one may hope to reduce the overhead of circuit synthesis by compiling quantum algorithms not only into Clifford and $T$ gates~\cite{dawson_solovay_2006,kliuchnikov_asymptotically_2013,selinger_efficient_2014,kliuchnikov_practical_2016,ross_optimal_2016}, but also into Pauli rotations made available by more exotic magic states~\cite{landahlComplexInstructionSet2013,howard_small_2016,choi_fault_2023,akahoshi_partially_2024,yoshioka_transversal_2025,ismail_star_2025,zheng_magic_2024}.
The general theory of quantum gate teleportation and useful resource states remains underexplored, with limited understanding of the essential structure of gate teleportation protocols and the usefulness of different resource states.
Exploring this research direction is timely, given its importance for quantum computation with quantum low-density parity-check codes.

\begin{figure*}[ht!]
\centering
\includegraphics[width=0.93\linewidth]{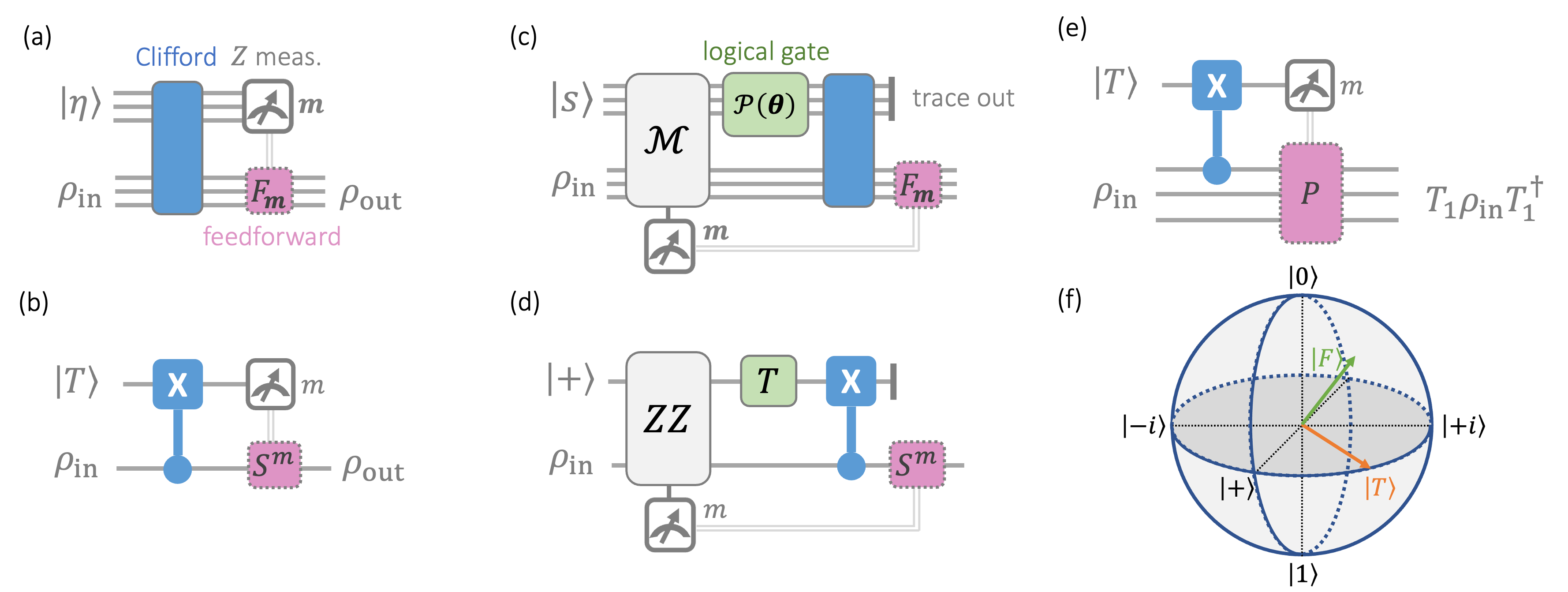}
\caption{
(a) Magic gate teleportation (MGT) is a type of quantum gate teleportation that uses a non-stabilizer resource state $\ket\eta$ to implement a non-Clifford gate on an arbitrary input state $\rho_{\mathrm{in}}$ without revealing any information about it.
MGT comprises a Clifford circuit, followed by Pauli $Z$ measurements of all the qubits in the resource register; applying feedforward operators $F_{\bm m}$ for the measurement outcome $\bm m$ results in a deterministic MGT protocol. (b) The canonical MGT protocol that uses $\ket{T}=T\ket{+}$ and implements the $T$ gate.
(c) Any MGT protocol using $\ket{\eta}=\mathcal{P}(\bm \theta)\ket{s}$, where $\mathcal{P}(\bm \theta)$ commutes with the backpropagated measurements $\mathcal{M}$, can be viewed as first encoding $\rho_{\mathrm{in}}$ into a stabilizer code heralded by the measurement outcome $\bm m$, followed by implementing a logical gate via $\mathcal P(\bm\theta) \otimes I$.
(d) For the MGT protocol in (b), the backpropagated measurement $Z\otimes Z$ encodes $\rho_{\mathrm{in}}$ into a two-qubit repetition code and $T \otimes I$ implements a logical gate.
(e) When the input state $\rho_{\mathrm{in}}$ is stabilized by certain Pauli operators, the feedforward operators in MGT can be replaced by Pauli operators.
This observation sheds light on the paradigm of algorithmic fault tolerance~\cite{zhou_algorithmic_2024} and allows to reduce the number of feedforward Clifford operators in quantum computing.
(f) Not every non-stabilizer state is useful for MGT.
For instance, all useful single-qubit resource states lie on the three great circles (blue) on the Bloch sphere and the state $\ketbra{F}{F} = \frac{1}{2}I+\frac{1}{2\sqrt{3}}(X+Y+Z)$ from the five-qubit distillation protocol~\cite{bravyi_universal_2005} cannot be used for MGT.
}
\label{fig:Structural}
\end{figure*}

In this article, we study magic gate teleportation (MGT), a type of quantum gate teleportation that uses non-stabilizer resource states and implements non-Clifford gates on arbitrary input states without revealing information about those states.
This setting is particularly important for fault tolerant quantum computation, where one must not reveal the intermediate state during a long computation.
We show how non-stabilizer resource states obtained by applying commuting Pauli rotations to stabilizer states can be used to implement non-Clifford gates via MGT.
Conversely, we show that if resource states can be used in MGT, then they must have a specific form; in particular, for single-qubit resource states, we prove that useful states in MGT must be Clifford-equivalent to $\frac{1}{\sqrt{2}}(\ket{0}+e^{i\theta}\ket{1})$ for an arbitrary angle $\theta$.
We also generalize our finding for multi-qubit case and prove that useful states in a broad class of MGT protocols must be Clifford-equivalent to diagonal states in the computational basis.
Our result highlights the subtlety that not every non-stabilizer state is useful in quantum computation if one requires quantum gate teleportation to reveal no information about the input state.
We also provide an algorithm to construct explicit circuit implementations of magic gate teleportation protocols.
Finally, we investigate conditions when the feedforward operators in MGT protocols can be implemented via Pauli operators.
Our results shed light on the paradigm of algorithmic fault tolerance~\cite{zhou_algorithmic_2024,cain2025fast,serra-peralta_decoding_2026}, providing an alternative explanation for why certain logical measurements that control feedforward logical Clifford operators can be replaced by ``coin tosses'' and accounted for with the Pauli frame update.

\section{Preliminaries}

In this section, we briefly review the stabilizer formalism~\cite{gottesman_stabilizer_1997}.
Then, we discuss magic gate teleportation, which is a method of implementing non-Clifford gates on an arbitrary input state without revealing any information about it.

\subsection{Stabilizer formalism and the Clifford hierarchy}

For an $n$-qubit pure state $\ket{\psi}$, we say that an $n$-qubit Pauli operator $P$ stabilizes $\ket{\psi}$ if $P\ket{\psi} = \ket{\psi}$.
We then define the stabilizer group $\mathcal S$ of $\ket\psi$ as the group generated by all Pauli operators that stabilize $\ket\psi$.
Following Ref.~\cite{beverland_lower_2020}, we can define the stabilizer nullity $\nu(\ket{\psi}) = n - r$, where $r$ is the number of independent generators of $\mathcal S$ and refer to states with stabilizer nullity equal zero as stabilizer states.

We also use stabilizer groups in the context of QEC.
Namely, for an $n$-qubit system, the codespace of a stabilizer code $\mathcal Q$ is the $+1$ eigenspace of all elements of an Abelian subgroup $\mathcal{S}$ of the $n$-qubit Pauli group with $-I\not\in\mathcal S$.
When we discuss a stabilizer code $\mathcal Q$, we not only refer to its codespace, but also to a specific choice of the logical Pauli operators.

It will be convenient to introduce the following notation.
Let $\mathcal{P}=\{P_1,P_2,\ldots\}$ and $\bm{\theta}=\{\theta_1,\theta_2,\ldots\}$ be ordered lists of $n$-qubit Pauli operators and angles, respectively.
We define
\begin{equation}
\mathcal{P}(\bm{\theta})= \prod_{j} P_j (\theta_j)= \ldots P_2(\theta_2)P_1(\theta_1),
\end{equation}
where $P_j(\theta_j)=\exp(-i\theta_j P_j/2)$ is an $n$-qubit Pauli rotation by angle $\theta_j$.
Up to a global phase, any unitary can be decomposed as a product of Pauli rotations $\mathcal{P}(\bm{\theta})$~\cite{nielsenQuantumComputationQuantum2010,hegde_pauli_2016}.

We will also use the notion of the Clifford hierarchy \cite{gottesman_demonstrating_1999,zeng_transversality_2007,zeng_semiclifford_2008,anderson_classification_2016}.
The first level of the Clifford hierarchy $\mathcal C_1$ is defined as the $n$-qubit Pauli group, and, for $k\geq 2$, the $k$th level is defined recursively as
\begin{equation}
    \mathcal C_k
    =
    \{\text{$n$-qubit unitary $U$} | \forall P\in\mathcal C_1: UPU^\dagger \in \mathcal C_{k-1} \}
\end{equation}
up to a global phase~\cite{zeng_transversality_2007}.

The characterization of diagonal gates in the Clifford hierarchy~\cite{cui_diagonal_2017,anderson_groups_2024} implies that, if a diagonal gate $U\in\mathcal C_k$, then it admits a representation $U= \mathcal{P}(\bm \theta)$ with a set of commuting Pauli operators $\mathcal{P}$ and  $\theta_j \in (\frac{\pi}{2^{k-1}})\mathbb Z$ for every $\theta_j \in \bm \theta$.
For example, any gate in $\mathcal{C}_3$ of this form can be expressed using Pauli-rotation angles $\theta_j\in(\frac{\pi}{4})\mathbb Z$.
This includes familiar examples such as $T=Z(\frac{\pi}{4})$ and $CCZ=\mathcal P(\bm\theta)$, where $\mathcal P=\{Z_1,Z_2,Z_3,Z_1Z_2,Z_1Z_3,Z_2Z_3,Z_1Z_2Z_3\}$ and $\bm\theta=(\frac{\pi}{4},\frac{\pi}{4},\frac{\pi}{4},-\frac{\pi}{4},-\frac{\pi}{4},-\frac{\pi}{4},\frac{\pi}{4})$ up to a global phase.

\subsection{Magic gate teleportation}

Quantum gate teleportation~\cite{gottesman_demonstrating_1999,zhou_methodology_2000} is a method that implements a target gate $U$ on an unknown arbitrary input state $\rho_{\mathrm{in}}$.
To achieve that, it consumes some resource state $\ket{\eta}$ (which we assume to be a pure state) and uses operations from a specified set of free operations.
We focus on quantum gate teleportation for non-Clifford gates with the following free operations: state preparation in the computational basis, Clifford gates and single-qubit Pauli $Z$ measurements; see Fig.~\ref{fig:Structural}(a).

Importantly, throughout this article, we require that the input state is always recoverable from the output of quantum gate teleportation, i.e., no information about $\rho_{\mathrm{in}}$ is revealed thorough the protocol.
This requirement is motivated by the fact that quantum gate teleportation is used to implement logical gates during fault-tolerant quantum computation and the encoded state must not be revealed in the middle of computation.
Concretely, we consider protocols which, for an arbitrary input state $\rho_{\mathrm{in}}$, return the output state
\begin{equation}
\rho_{\mathrm{out}}(\bm m ) = U_{\bm m}\rho_{\mathrm{in}}U_{\bm m}^{\dagger},
\end{equation}
where $U_{\bm m}$ is a unitary, conditioned on the outcome $\bm m \in \{0,1\}^*$ of single-qubit Pauli $Z$ measurements.
Since we are interested in implementing non-Clifford gates, at least one of the unitaries in $\{U_{\bm m}\}$ should be non-Clifford; we assume it is $U_{\bm 0}$.
For brevity, we refer to such protocols as magic gate teleportation (MGT).
We then say that an MGT protocol uses the resource state $\ket{\eta}$ and implements $U_{\bm m}$ heralded by the outcome $\bm m$.
Also, we can modify any MGT protocol to make it deterministically implement a non-Clifford unitary $U_{\bm 0}$ by applying the following feedforward operators
\begin{equation}
    F_{\bm m} =  U_{\bm 0} U^\dag_{\bm m}
    \label{eq:deterministic-feedforward-condition}
\end{equation}
for the outcome $\bm m$. Consequently, its output is $\rho_{\mathrm{out}} = U_{\bm 0}\rho_{\mathrm{in}}U_{\bm 0}^{\dagger}$, regardless of the outcome $\bm m$.
We refer to such protocols as deterministic MGT.

While discussing MGT, it is convenient to backpropagate the Pauli $Z$ measurements to the beginning of the protocol; see Fig.~\ref{fig:Structural}(c).
Then, MGT may be viewed as measuring a set of independent commuting Pauli operators $\mathcal M = \{ V^\dagger (Z_j\otimes I) V\}$ on the joint state $\ketbra{\eta}{\eta}\otimes\rho_{\mathrm{in}}$ before applying the Clifford circuit $V$, where we use the convention that the first (second) register is the resource (input) state and $Z_j$ is the Pauli $Z$ operator on the $j$th qubit in the resource register.
We call $\mathcal{M}$ the backpropagated measurements.
Without loss of generality, we assume that every element in $\mathcal{M}$ has non-trivial support on both the resource and input register; otherwise, we would measure either the resource state (modifying it into some other resource state) or the input state (revealing some information about it).
Also, the assumption that the input state is always recoverable from the output of the MGT protocol is equivalent to the condition that 
the backpropagated measurements $\mathcal M$ reveal no information about $\rho_{\mathrm{in}}$.
Combined with the requirement that MGT should work for an arbitrary input state $\rho_{\mathrm{in}}$, we then obtain the following condition
\begin{equation}
\Tr[M_j \ketbra{\eta}{\eta}\otimes\rho_{\mathrm{in}} ] = 0
\label{eq:state_preserving}
\end{equation}
for every $M_j \in\mathcal{M}$.

\section{Structure of MGT protocols}

We now describe the stabilizer-code structure hidden inside MGT protocols.
The main observation is that, after backpropagating the final measurements to the beginning, the protocol can be viewed as encoding the input state into a stabilizer code (heralded by the measurement outcomes), followed by applying a non-Clifford logical gate.
The following lemma captures such intuition behind a general class of MGT protocols.

\begin{lemma}
Consider an MGT protocol using a resource state $\ket{\eta}=\mathcal{P}(\bm\theta)\ket{s}$, where $\mathcal{P}$ is a set of non-trivial commuting Pauli operators and $\ket{s}$ is a stabilizer state not stabilized by any element in $\mathcal{P}$.
Let $\mathcal M = \{M_j\}$ and $\bm m = \{ m_j \}$ be the backpropagated measurements and their outcomes. 
If
\begin{equation}
    [M_j,\mathcal{P}(\bm\theta)\otimes I]=0
\end{equation}
for every $ M_j\in\mathcal M$, then measuring $\mathcal M$ on $\ketbra{\eta}{\eta}\otimes\rho_{\mathrm{in}}$ is equivalent to encoding $\rho_{\mathrm{in}}$ into a stabilizer code $\mathcal Q_{\bm m}$ heralded by $\bm m$ and implementing a logical gate on $\mathcal Q_{\bm m}$ via $\mathcal P(\bm\theta)\otimes I$.
\label{lem:compatible-logical-action}
\end{lemma}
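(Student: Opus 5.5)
The plan is to commute the measurement projectors past the Pauli rotations, so that the protocol becomes a stabilizer measurement on $\ket{s}\otimes\rho_{\mathrm{in}}$ followed by $\mathcal P(\bm\theta)\otimes I$.

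First, write the measurement-outcome projector as
\[
\Pi_{\bm m}=\prod_j \tfrac{1}{2}\bigl(I+(-1)^{m_j}M_j\bigr).
\]
By hypothesis each $M_j$ commutes with $\mathcal P(\bm\theta)\otimes I$, so $\Pi_{\bm m}$ commutes with it as well. This gives
\[
\Pi_{\bm m}\bigl(\ketbra{\eta}{\eta}\otimes\rho_{\mathrm{in}}\bigr)\Pi_{\bm m}
=(\mathcal P(\bm\theta)\otimes I)\,\Pi_{\bm m}\bigl(\ketbra{s}{s}\otimes\rho_{\mathrm{in}}\bigr)\Pi_{\bm m}\,(\mathcal P(\bm\theta)\otimes I)^\dagger .
\]
Hence the protocol is equivalent to measuring $\mathcal M$ on $\ketbra{s}{s}\otimes\rho_{\mathrm{in}}$ and then applying $\mathcal P(\bm\theta)\otimes I$. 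The outcome probabilities also agree, since $\mathcal P(\bm\theta)\otimes I$ is unitary and commutes with $\Pi_{\bm m}$.

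Second, identify the stabilizer code. Let $\mathcal S_s=\langle g_1,\dots,g_r\rangle$ be the stabilizer group of $\ket{s}$, with $r$ equal to the number of resource qubits. The code $\mathcal Q_{\bm m}$ is the one stabilized by the group generated by $\{g_i\otimes I\}\cup\{(-1)^{m_j}M_j\}$, with its elements suitably updated. The $M_j$ may anticommute with some $g_i\otimes I$, so I would run the standard stabilizer-measurement update: choose generators of $\mathcal S_s$ such that each $M_j$ anticommutes with at most one of them, replace the anticommuting generators, and keep the commuting ones.

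To get logical operators, note that the map $\rho_{\mathrm{in}}\mapsto \Pi_{\bm m}(\ketbra{s}{s}\otimes\rho_{\mathrm{in}})\Pi_{\bm m}$ is proportional to an isometry $\rho_{\mathrm{in}}\mapsto W_{\bm m}\rho_{\mathrm{in}}W_{\bm m}^\dagger$ with $W_{\bm m}=\Pi_{\bm m}(\ket{s}\otimes I)$. Condition~\eqref{eq:state_preserving} says the outcome carries no information about $\rho_{\mathrm{in}}$, so $W_{\bm m}^\dagger W_{\bm m}\propto I$. Logical Paulis come from pushing the bare Paulis $I\otimes P$ through $W_{\bm m}$: for each input Pauli $P$, multiply $I\otimes P$ by elements of $\mathcal S_s\otimes I$ until it commutes with every $M_j$. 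The result $\bar P$ satisfies $\bar P\, W_{\bm m}=W_{\bm m}P$.

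Third, show that $\mathcal P(\bm\theta)\otimes I$ acts as a logical gate. It commutes with every $M_j$, so it preserves the joint eigenspace of $\mathcal M$. It does not in general commute with $g_i\otimes I$, so it need not preserve the code space. That is not needed, however: the image after applying it is $\mathcal P(\bm\theta)\otimes I$ times the code space, and the induced logical action is $\bar{U}=W_{\bm m}^\dagger(\mathcal P(\bm\theta)\otimes I)W_{\bm m}$ up to normalization. So I would phrase the conclusion as encoding followed by the physical operator $\mathcal P(\bm\theta)\otimes I$, whose effective action on the encoded information is $U_{\bm m}$.

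I expect the main obstacle to be the precise meaning of ``logical gate''. I would try to show that each rotation $P_k\otimes I$ maps into the normalizer of the measured part of the stabilizer. Because $P_k$ is not in $\mathcal S_s$ (up to sign) and commutes with all $M_j$, $P_k\otimes I$ should act on $\mathcal Q_{\bm m}$ as some logical Pauli $\bar L_k$ times a stabilizer-changing factor. That factor can be absorbed by redefining the code after the gate, giving $U_{\bm m}=\prod_k L_k(\theta_k)$.
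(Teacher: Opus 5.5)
Your first step is the same as the paper's proof: since each $M_j$ commutes with $\mathcal P(\bm\theta)\otimes I$, so does $\Pi_{\bm m}$. The protocol therefore becomes a measurement of $\mathcal M$ on $\ketbra{s}{s}\otimes\rho_{\mathrm{in}}$ followed by $\mathcal P(\bm\theta)\otimes I$. The gap is in your third step. You assert that $\mathcal P(\bm\theta)\otimes I$ ``need not preserve the code space'', because it may fail to commute with stabilizers $g_i\otimes I$ of $\ket{s}$ that survive the measurement update, and you then settle for a weaker conclusion. But preserving $\mathcal Q_{\bm m}$ is exactly what ``implementing a logical gate on $\mathcal Q_{\bm m}$'' means, so the lemma is left unproved.

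The fallback also fails on its own terms. The compression $W_{\bm m}^\dagger(\mathcal P(\bm\theta)\otimes I)W_{\bm m}$ is unitary only if the code is preserved. If a code stabilizer $g\otimes I$ really anticommuted with $P_k\otimes I$, then $P_k(\theta_k)\otimes I$ with a non-Clifford angle would send codewords to superpositions of the $\pm1$ eigenspaces of $g\otimes I$. The image is then generally not a stabilizer code, so the ``stabilizer-changing factor'' cannot be absorbed by redefining the code.

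The missing idea is a dimension count showing that this situation never arises.
\begin{itemize}
\item The backpropagated measurements are $\mathcal M=\{V^\dagger(Z_j\otimes I)V\}$, one independent Pauli per resource qubit. So the range of each $\Pi_{\bm m}$ has dimension $2^{n_{\mathrm{in}}}$, where $n_{\mathrm{in}}$ is the number of input qubits.
\item For every outcome that occurs, $W_{\bm m}=\Pi_{\bm m}(\ket{s}\otimes I)$ is proportional to an isometry. This follows from the MGT requirement that the output be $U_{\bm m}\rho_{\mathrm{in}}U_{\bm m}^\dagger$ for every input, together with $[\Pi_{\bm m},\mathcal P(\bm\theta)\otimes I]=0$. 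Eq.~\eqref{eq:state_preserving} alone does not give this, since it constrains only single $M_j$ and not their products.
\item The image of $W_{\bm m}$ lies inside your code $\mathcal Q_{\bm m}$, which lies inside the range of $\Pi_{\bm m}$. Since the image is already $2^{n_{\mathrm{in}}}$-dimensional, all three coincide.
\end{itemize}
Hence every $g\otimes I\in\mathcal S_s\otimes I$ that commutes with all $M_j$ acts as the identity on the code and already lies in $\langle(-1)^{m_j}M_j\rangle$. Your ``suitably updated'' stabilizer group is therefore just the measured one. Then $[\mathcal P(\bm\theta)\otimes I,\Pi_{\bm m}]=0$ immediately shows that $\mathcal P(\bm\theta)\otimes I$ maps $\mathcal Q_{\bm m}$ to itself and restricts to a unitary on it. This is the paper's argument, which takes the codespace to be the range of $\Pi_{\bm m}$ from the outset. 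With this step added, your compression formula is legitimate and no stabilizer-changing factor appears.
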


\begin{proof}
Since $\mathcal{P}(\bm\theta)\otimes I$ commutes with every $M_j$, measuring $\mathcal M$ on $\ketbra{\eta}{\eta}\otimes\rho_{\mathrm{in}}$ is equivalent to first measuring $\mathcal M$ on $\ketbra{s}{s}\otimes\rho_{\mathrm{in}}$ and then applying $\mathcal{P}(\bm\theta)\otimes I$.
The codespace projector for the measurement outcome $\bm m$ is
\begin{equation}
    \Pi_{\bm m}
    \propto
    \prod_j (I+(-1)^{m_j}M_j).
\end{equation}
Thus, the measurement encodes $\rho_{\mathrm{in}}$ into the stabilizer code $\mathcal Q_{\bm m}$ whose codespace is defined by $\Pi_{\bm m}$.
We also have $[\mathcal{P}(\bm\theta)\otimes I,\Pi_{\bm m}]=0$, so $\mathcal{P}(\bm\theta)\otimes I$ preserves the codespace for each $\Pi_{\bm m}$ and acts as a logical gate.
\end{proof}

To the best of our knowledge,
all known MGT protocols~\cite{gottesman_demonstrating_1999,zhou_methodology_2000,bravyi_universal_2005,jones_low-overhead_2013,eastin_distilling_2013,landahlComplexInstructionSet2013} are covered by Lemma~\ref{lem:compatible-logical-action}; we illustrate it with Fig.~\ref{fig:Structural}(c).

\begin{example}
    Consider the deterministic MGT protocol~\cite{bravyi_universal_2005} that uses a resource state $\ket{T}=T\ket{+}$ and implements the $T$ gate; see Fig.~\ref{fig:Structural}(b).
    The backpropagated measurement is $\mathcal{M}=\{ZZ\}$ with outcome $m$. Lemma~\ref{lem:compatible-logical-action} identifies the stabilizer code as a two-qubit repetition code stabilized by $(-1)^mZZ$ and logical Paulis $\overline X=XX$ and $\overline Z=(-1)^mZI$; the gate $TI$ acts as the logical gate $\overline T\,\overline S^m$.
    Depending on $m$, the MGT protocol implements either the $T$ or $T^\dagger$ gate and applying feedforward operator $S^m$ makes it a deterministic MGT protocol that implements the $T$ gate; see Fig.~\ref{fig:Structural}(d).
    \label{example:1}
\end{example}

We now use Lemma~\ref{lem:compatible-logical-action} to construct MGT protocols that use resource states represented as Pauli rotations applied to stabilizer states.

\begin{theorem}
Let $    \ket{\eta}=\mathcal{P}(\bm\theta)\ket{s}$
be a resource state, where $\mathcal{P}=\{P_j\}$ is a set of non-trivial commuting Pauli operators and $\ket{s}$ is a stabilizer state not stabilized by any elements in $\mathcal{P}$.
Let $\{P'_j\}$ be a set of independent generators of $\mathcal{P}$ and $\mathcal I_j$ be index sets defined via $P_j=\prod_{k\in\mathcal I_j}P'_k$.
For any such $\{P'_j\}$, there is an MGT protocol
that uses $\ket{\eta}$ and implements
\begin{equation}
    U_{\bm m} = \prod_j P_j\left((-1)^{q_j}\theta_j\right)
    \label{eq: heralded_gate}
\end{equation}
heralded by the outcome ${\bm m}=\{ m_k \}$, where $q_j = \sum_{k\in\mathcal I_j}m_k$. 
\label{th:1}
\end{theorem}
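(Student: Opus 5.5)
We will build the protocol directly by generalizing the $\ket{T}$ example (Example~\ref{example:1}). Let $r=|\{P'_k\}|$. We attach an $r$-qubit input register and choose the backpropagated measurements
\begin{equation}
    M_k = P'_k \otimes Z_k, \qquad k=1,\ldots,r ,
\end{equation}
where $Z_k$ acts on the $k$th input qubit. The $M_k$ commute pairwise, because the $P'_k$ commute and the $Z_k$ commute. They are independent, because their input parts are. Each $M_k$ commutes with $\mathcal{P}(\bm\theta)\otimes I$, because every $P_j$ commutes with every $P'_k$. Hence Lemma~\ref{lem:compatible-logical-action} applies, provided we also check the MGT requirements. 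Condition~\eqref{eq:state_preserving} needs care: $\Tr[(P'_k\otimes Z_k)\ketbra{\eta}{\eta}\otimes\rho_{\mathrm{in}}]=\bra{\eta}P'_k\ket{\eta}\Tr[Z_k\rho_{\mathrm{in}}]$, and this vanishes for all $\rho_{\mathrm{in}}$ only if $\bra{s}P'_k\ket{s}=0$ (here we use $[P'_k,\mathcal{P}(\bm\theta)]=0$). I would therefore, as in Example~\ref{example:1}, work with the nondisclosure condition in its real form. That form says the measured operators, restricted to the input, must generate no element of the form $(\text{stabilizer of }\ket{s})\otimes(\text{nontrivial})$ that has nonzero expectation. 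Equivalently, no nontrivial product $\prod_{k\in S}P'_k$ lies in $\pm\mathcal S(\ket{s})$.

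The key step, and the main obstacle, is to justify this condition from the hypothesis that $\ket{s}$ is not stabilized by any element of $\mathcal{P}$. The hypothesis only covers the listed $P_j$, not arbitrary products of generators. I would first reduce to a generating set $\{P'_k\}$ for which the condition holds. If some product $\prod_{k\in S}P'_k=\pm$ (stabilizer), then its rotation acts on $\ket{s}$ as a phase, so it can be absorbed into a global phase and the remaining rotations. If needed, we may also multiply the $P'_k$ by stabilizer elements of $\ket{s}$, which does not change the action on $\ket{s}$. I expect to argue that the statement's hypothesis makes this reduction unnecessary, or harmless, and this is where I would spend the most effort.

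Next we identify the code. With outcome $\bm m$, the projector is $\Pi_{\bm m}\propto\prod_k(I+(-1)^{m_k}P'_k\otimes Z_k)$. On the codespace, $P'_k\otimes I$ acts as $(-1)^{m_k}(I\otimes Z_k)$, which is the logical $\overline Z_k$ up to this sign. Hence $P_j\otimes I=\prod_{k\in\mathcal I_j}P'_k\otimes I$ acts as $(-1)^{q_j}\overline{Z}^{(j)}$, where $\overline{Z}^{(j)}=\prod_{k\in\mathcal I_j}\overline Z_k$. It follows that $P_j(\theta_j)\otimes I$ acts as the logical rotation $\overline{Z}^{(j)}((-1)^{q_j}\theta_j)$. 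Taking the product over $j$ gives $U_{\bm m}$ as in Eq.~\eqref{eq: heralded_gate}, with the $P_j$ read as the corresponding logical operators.

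Finally, we realize the protocol with free operations. We pick a Clifford $V$ mapping each $M_k$ to $Z_k$ on the resource register and mapping the logical Paulis of $\mathcal Q_{\bm 0}$ to the physical Paulis of the output register. This defines the labelling of $P_j$ on the output qubits. Such a $V$ exists because the $M_k$ together with $\mathcal S(\ket{s})\otimes I$ restricted appropriately form an independent commuting set, so it extends to a full symplectic basis. We then check that $\ket{s}$ decodes to a fixed state and that the output is $U_{\bm m}\rho_{\mathrm{in}}U_{\bm m}^\dagger$, up to a Pauli correction that depends on $\bm m$. Such a Pauli correction can be absorbed into $U_{\bm m}$, since it only flips signs of rotation angles consistently with $q_j$.
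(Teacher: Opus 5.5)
\textbf{Comparison with the paper.} Your construction is the paper's, up to a free Clifford on the input register. The paper measures $P'_k\otimes P'_k$ on an $n$-qubit input; you measure $P'_k\otimes Z_k$ on an $r$-qubit input. Your sign bookkeeping ($P'_k\otimes I$ acts as $(-1)^{m_k}\overline Z_k$ on the code, hence the sign $(-1)^{q_j}$) is the paper's use of $(-1)^{q_j}P_j\otimes P_j\in\mathcal S$.

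\textbf{The gap.} The genuine gap is the step you flagged and left open, namely nondisclosure: $\bra{s}\prod_{k\in S}P'_k\ket{s}=0$ for every nonempty $S$. This does not follow from the hypothesis as written, and the reduction you hope to show is ``unnecessary'' or ``harmless'' is neither. Take $\mathcal P=\{Z_1,Z_2\}$ with $P'_k=P_k$, $\ket{s}=\ket{\Phi^+}$, and $\theta_1=-\theta_2=\pi/4$. No element of $\mathcal P$ stabilizes $\ket{s}$, even up to sign, but $Z_1Z_2$ does. So $M_1M_2=Z_1Z_2\otimes Z_1Z_2$ has expectation $\Tr[Z_1Z_2\rho_{\mathrm{in}}]$, and the parity $m_1\oplus m_2$ leaks information about the input. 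Worse, $\ket{\eta}=\ket{\Phi^+}$ is itself a stabilizer state. Every $U_{\bm m}$ is, up to phase, a tensor product of $T^{\pm1}$'s, so no protocol built from free operations can output $U_{\bm m}\ket{++}$; the conclusion is simply false here. Your repair, absorbing $Z_1Z_2$ into a phase, collapses the two rotations into $Z_1(\theta_1+\theta_2)$. That changes the implemented gate, so it is not harmless.

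\textbf{The missing assumption.} What is missing is an extra assumption. The paper imposes it at the start of its proof by taking $\nu(\ket{\eta})=n$ ``without loss of generality''; the example shows it is really an added hypothesis. With it, your key step takes one line. Each nonempty product $\prod_{k\in S}P'_k$ commutes with $\mathcal P(\bm\theta)$. If it were $\pm$ a stabilizer of $\ket{s}$, it would be a nontrivial Pauli stabilizer of $\ket{\eta}$. Since Pauli expectation values on a stabilizer state lie in $\{0,\pm1\}$, they must all vanish.

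The same reasoning shows that full nullity forces $r=n$. The elements of $\mathcal S(\ket{s})$ commuting with every $P'_k$ form a subgroup of rank at least $n-r$, and they stabilize $\ket{\eta}$; this is why the paper uses exactly $n$ measurements. Your $r$-qubit variant needs only the weaker condition that no nontrivial element of $\langle P'_k\rangle$ equals $\pm$ a stabilizer of $\ket{s}$. Under that condition, the part of $\mathcal S(\ket{s})\otimes I$ commuting with all $M_k$ has rank exactly $n-r$, and your decoder $V$ exists. But you must state this condition as a hypothesis rather than try to derive it.

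\textbf{A smaller point.} An outcome-dependent Pauli byproduct $R_{\bm m}$ could not be ``absorbed as sign flips'', because $R_{\bm m}U_{\bm m}$ is not of the form $U_{\bm m'}$. Fortunately none arises: the updated logicals $I\otimes Z_k$ and $g_k\otimes X_k$, with $g_k\in\mathcal S(\ket{s})$, do not depend on $\bm m$. Any fixed byproduct is removed by a free Pauli.
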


Since Clifford gates are free operations in MGT, Theorem~\ref{th:1} also gives MGT protocols for any Clifford-equivalent heralded gates $C_1U_{\bm m}C_2$, where $C_1$ and $C_2$ can be arbitrary Clifford operators.

\begin{proof}
We are about to construct an MGT protocol with an input state of the same dimensionality as the resource state.
Without loss of generality we assume that the resource state $\ket{\eta}$ satisfies $\nu(\ket{\eta})=n$; otherwise, if $\nu(\ket{\eta})<n$, then $\ket{\eta}$ can be converted using free operations in MGT to a $\nu(\ket{\eta})$-qubit state $\ket{\eta'}$ such that $\nu({\ket{\eta'}})$ is equal to the size of the new state.
We choose the backpropagated measurements as $\mathcal{M} = \{P'_j\otimes P'_j \}^{n}_{j=1}$. 
Based on Lemma~\ref{lem:compatible-logical-action}, therefore, measuring $\mathcal{M}$ will encode $\rho_{\mathrm{in}}$ into a stabilizer code, and $\mathcal{P}(\bm{\theta})\otimes I$ will serve as a logical gate of the stabilizer code.

After measurement of $\mathcal{M}$ with outcome $\bm m = \{ m_k \}$, $\{(-1)^{m_k}P'_k\otimes P'_k \}$ will generate the stabilizer group of the code. All logical Pauli $\overline P_j$ operators of the code are given by $\overline {P_j} = I\otimes P_j$, as they commute with all elements in $\mathcal{M}$. Now we examine the effect of $\mathcal{P}(\bm{\theta})\otimes I$ as a logical gate. Since $P_j=\prod_{k\in\mathcal I_j}P'_k$, the operator $(-1)^{q_j}P_j\otimes P_j$ will be in the stabilizer group. Therefore,
\begin{equation}
\begin{split}
    {\mathcal{P}}(\bm{\theta})\otimes I &= \prod_{j}P_j(\theta_j) \otimes I =  I\otimes \prod_{j}P_j\left((-1)^{q_j}\theta_j\right) \\
    &= {\prod_{j}\overline{ P_j}\left((-1)^{q_j}\theta_j\right)}.
    \label{eq: logical_effect}
\end{split}
\end{equation}
Therefore, the implemented gates are $U_{\bm m}=\prod_j P_j\left((-1)^{q_j}\theta_j\right)$ heralded by $\bm m$, as claimed.
\end{proof}

\begin{corollary}
The MGT protocol in Theorem~\ref{th:1} can be promoted to a deterministic MGT protocol for gate
\begin{equation}
    U_{\bm 0}=\mathcal{P}(\bm \theta)
\end{equation}
by applying the feedforward operator
\begin{equation}
    F_{\bm m}
    =
    \prod_j P_j^{q_j}(2\theta_j)
    \label{eq: feedforward}
\end{equation}
for the outcome ${\bm m}=\{m_k\}$, where $q_j=(\sum_{k\in\mathcal I_j}m_k) \mod 2$.
Moreover, if $U_{\bm 0}\in \mathcal{C}_k$, then $F_{\bm m} \in \mathcal{C}_{k-1}$.
\label{cor:deterministic-feedforward}
\end{corollary}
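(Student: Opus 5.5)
The plan is to verify the feedforward identity $F_{\bm m}U_{\bm m}=U_{\bm 0}$ directly, using that all the $P_j$ commute. After that, we bound the Clifford-hierarchy level of $F_{\bm m}$ by exploiting its diagonal-like, commuting structure.

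\textbf{Feedforward identity.} By Theorem~\ref{th:1}, outcome $\bm m$ heralds $U_{\bm m}=\prod_j P_j((-1)^{q_j}\theta_j)$. Only the parity of $q_j$ matters, so we may reduce $q_j$ modulo $2$. Since the $P_j$ pairwise commute, all rotations $P_j(\cdot)$ commute, and rotations about the same axis add angles. Hence
\begin{equation}
F_{\bm m}U_{\bm m}=\prod_j P_j(2q_j\theta_j)\,P_j\big((-1)^{q_j}\theta_j\big)=\prod_j P_j\big(2q_j\theta_j+(-1)^{q_j}\theta_j\big).
\end{equation}
For $q_j\in\{0,1\}$ the angle equals $\theta_j$ in both cases, so $F_{\bm m}U_{\bm m}=\mathcal P(\bm\theta)=U_{\bm 0}$. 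This matches Eq.~\eqref{eq:deterministic-feedforward-condition}, because $U_{\bm 0}U_{\bm m}^\dagger=F_{\bm m}$. Note also that $U_{\bm 0}=\mathcal P(\bm\theta)$, since $q_j=0$ when $\bm m=\bm 0$. The output is therefore $U_{\bm 0}\rho_{\mathrm{in}}U_{\bm 0}^\dagger$ for every outcome.

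\textbf{Hierarchy statement.} The $P_j$ commute and are independent up to the generator relations. So a Clifford $C$ maps the generators $P'_k$ to single-qubit $Z$'s, or to products of $Z$'s; commuting Paulis that are independent modulo phases can be simultaneously Clifford-diagonalized. Then $CU_{\bm 0}C^\dagger$ is diagonal, and since Clifford conjugation preserves each level $\mathcal C_k$, it lies in $\mathcal C_k$. By the cited characterization of diagonal gates in the hierarchy, it has a representation $\prod_j Z_{S_j}(\phi_j)$ with $\phi_j\in\frac{\pi}{2^{k-1}}\mathbb Z$, where $Z_{S_j}$ denotes a product of $Z$'s.

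Here lies the main obstacle. This representation need not coincide with the given angles $\bm\theta$: the given list may contain redundant or non-canonical angles whose individual rotations are not all in the hierarchy, even though their product is. I would first try to argue directly with the given decomposition. If each $\theta_j\in\frac{\pi}{2^{k-1}}\mathbb Z$, then $2\theta_j\in\frac{\pi}{2^{k-2}}\mathbb Z$. A product of commuting Pauli rotations with such angles is, after Clifford diagonalization, a diagonal gate of the form treated in Refs.~\cite{cui_diagonal_2017,anderson_groups_2024}, and therefore lies in $\mathcal C_{k-1}$.

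To justify that the angles may be taken of this form, I would avoid the issue by working with the angles directly rather than relying on the representation. The alternative route is to compute $F_{\bm m}=U_{\bm 0}U_{\bm m}^\dagger$ and note that, for any Pauli $Q$, $U_{\bm 0}$ and $U_{\bm m}$ differ only by sign flips of the angles. Applying the recursive definition then shows that the diagonal phase polynomial of $F_{\bm m}$ is twice a subset-restricted version of that of $U_{\bm 0}$. Doubling a phase polynomial whose coefficients lie in $\frac{\pi}{2^{k-1}}\mathbb Z$ lowers the level by one, which can be shown by induction using $P(\pi/2)\in\mathcal C_2$ and $P(\pi)\in\mathcal C_1$ as base cases. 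I expect making this angle-representation step rigorous to be the only delicate part.
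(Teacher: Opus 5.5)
Your verification of $F_{\bm m}U_{\bm m}=U_{\bm 0}$ is correct and is the same computation as in the paper. The hierarchy claim, however, is not proved.

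Your first route needs every listed angle to satisfy $\theta_j\in\frac{\pi}{2^{k-1}}\mathbb Z$. As you suspect, $U_{\bm 0}\in\mathcal C_k$ does not imply this. Take $\ket{s}=\ket{+}^{\otimes 5}$, $P'_k=Z_k$, and $\mathcal P$ equal to the $31$ nontrivial $Z$-type Paulis on five qubits, each with angle $\pi/8$. Then $U_{\bm 0}\propto I$, but the doubled angles in $F_{\bm m}$ are $\pi/4$. An angle-based argument would therefore only place $F_{\bm m}$ in $\mathcal C_3$, whereas in fact $F_{\bm m}=I$.

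Your alternative route does not escape this. The phrase ``doubling a phase polynomial whose coefficients lie in $\frac{\pi}{2^{k-1}}\mathbb Z$'' presupposes the same property of the angles. Switching to a canonical dyadic representation of $U_{\bm 0}$ would also require showing that $\prod_j P_j^{q_j}(2\theta_j)$ does not change when $\bm\theta$ is replaced by that representation, and you do not address this. So the step you call ``delicate'' is the entire content of the second claim, and it is left open.

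Your observation that $U_{\bm 0}$ and $U_{\bm m}$ differ only by sign flips of the angles is the right starting point. The missing idea is that these sign flips are implemented by conjugation with a single Pauli. Because the $P'_k$ are independent and commuting, there are Paulis $R_k$ with $\{R_k,P'_k\}=0$ and $[R_k,P'_\ell]=0$ for $\ell\neq k$. Setting $R_{\bm m}=\prod_k R_k^{m_k}$ gives $R_{\bm m}P_jR_{\bm m}^\dagger=(-1)^{q_j}P_j$. Hence $U_{\bm m}=R_{\bm m}U_{\bm 0}R_{\bm m}^\dagger$, and
\begin{equation*}
F_{\bm m}=U_{\bm 0}U_{\bm m}^\dagger=\bigl(U_{\bm 0}R_{\bm m}U_{\bm 0}^\dagger\bigr)R_{\bm m}^\dagger .
\end{equation*}
The bracketed factor lies in $\mathcal C_{k-1}$ directly by the recursive definition of the hierarchy. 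Right-multiplying by a Pauli does not change the level, so $F_{\bm m}\in\mathcal C_{k-1}$.

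This is the paper's argument, and it never refers to the angles, so no diagonal-gate characterization or phase-polynomial bookkeeping is needed. It also shows that $F_{\bm m}$ depends only on $U_{\bm 0}$, not on the chosen list $\bm\theta$, which is exactly the representation-independence your canonical-representation route would have required.
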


In particular, if we require the feedforward operators to be Clifford, i.e., $F_{\bm m}\in\mathcal C_2$, then this deterministic MGT protocol can only implement gates in $\mathcal C_3$. We remark that hierarchy statement in Corollary~\ref{cor:deterministic-feedforward} is consistent with the relation between gates to implement and feedforward operators observed in Ref.~\cite{gottesman_demonstrating_1999,zhou_methodology_2000}.

\begin{proof}
It suffices to verify the deterministic-feedforward condition in Eq.~\eqref{eq:deterministic-feedforward-condition} using Eq.~\eqref{eq: heralded_gate}. We have
\begin{equation}
   \begin{split}
    F_{\bm m}U_{\bm m}
    &=
    \prod_j P_j^{q_j}(2\theta_j)
    \prod_j P_j\left((-1)^{q_j}\theta_j\right) \\
    &=
    \prod_j P_j(\theta_j)
    =
    U_{\bm 0},
\end{split} 
\end{equation}
where we used the fact that each $P_j$ commutes with each other and that $P_j^{q_j}(2\theta_j)P_j\left((-1)^{q_j}\theta_j\right)=P_j(\theta_j)$.
Thus, the feedforward removes the dependence on the outcome $\bm m$ and promotes the MGT protocol to a deterministic one for gate $U_{\bm 0}$. 

For the hierarchy statement, since every $P_k'$ commutes with each other, we can choose a set of Pauli operators $\{R_k\}$ such that $\{R_k,P'_k\}=0$ and $[R_k,P'_\ell]=0$ for $\ell\neq k$.
Let $R_{\bm m}=\prod_k R_k^{m_k}$.
Since $P_j=\prod_{k\in\mathcal I_j}P'_k$, we have
\begin{equation}
    R_{\bm m}P_jR_{\bm m}=(-1)^{q_j}P_j,
\end{equation}
and hence $U_{\bm m}=R_{\bm m}U_{\bm 0}R_{\bm m}$.
Therefore,
\begin{equation}
    F_{\bm m}=U_{\bm 0}U_{\bm m}^\dagger
    = U_{\bm 0}R_{\bm m}U_{\bm 0}^\dagger R_{\bm m} .
\end{equation}
If $U_{\bm 0}\in\mathcal C_k$, then $U_{\bm 0}R_{\bm m}U_{\bm 0}^\dagger\in\mathcal C_{k-1}$ by the definition of the Clifford hierarchy.
Since the level of the Clifford hierarchy of is preserved if multiplied by a Pauli operator, it follows that $F_{\bm m}\in\mathcal C_{k-1}$.
\end{proof}

Notably, although the construction here only establishes the abstract existence of such MGT protocols, Sec.~\ref{sec:circuit-construction} will provide an explicit circuit implementation.
\section{Constraints on useful resource states}

We now focus on a converse direction to the one in the previous section, i.e., 
given a resource state that enables an MGT protocol to implement a non-Clifford gate for an arbitrary input state, what are the constraints on that resource state?

\subsection{Single-qubit states}

We first analyze single-qubit resource states, where the condition in Eq.~\eqref{eq:state_preserving} allows us to fully characterize useful states.

\begin{theorem}%[useful single-qubit resource states]
Let $\ket{\eta}$ be a single-qubit pure state. If there exists an MGT protocol using $\ket{\eta}$, then
\begin{equation}
    \ket{\eta}=CZ(\theta)\ket{+},
\end{equation}
where $C$ is a Clifford operator.
\label{th:single-qubit-teleportable}
\end{theorem}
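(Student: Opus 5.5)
The plan is to use only the information-preservation condition in Eq.~\eqref{eq:state_preserving} together with the fact that a single resource qubit admits essentially one backpropagated measurement. Since the resource register has one qubit, $\mathcal M=\{M\}$ with $M=V^\dagger(Z\otimes I)V$. By assumption, $M$ has nontrivial support on both registers, so $M=\pm A\otimes B$ with $A\in\{X,Y,Z\}$ and $B$ a nontrivial Pauli on the input register.

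The first step is to use that $\Tr$ factorizes over the tensor product. Eq.~\eqref{eq:state_preserving} then reads
\begin{equation}
\bra{\eta}A\ket{\eta}\,\Tr[B\rho_{\mathrm{in}}]=0
\end{equation}
for every input state $\rho_{\mathrm{in}}$. Because $B$ is a nontrivial Pauli, we can pick an input (an eigenstate of $B$) with $\Tr[B\rho_{\mathrm{in}}]=\pm1\neq0$. This forces $\bra{\eta}A\ket{\eta}=0$. Geometrically, the Bloch vector of $\ket\eta$ lies on the great circle orthogonal to one of the coordinate axes $A\in\{X,Y,Z\}$.

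The second step converts this into the claimed form. Choose a single-qubit Clifford $C$ with $CZC^\dagger=A$; such a $C$ exists because the Clifford group permutes $\{X,Y,Z\}$ up to sign. Then $\ket{\eta'}=C^\dagger\ket\eta$ satisfies $\bra{\eta'}Z\ket{\eta'}=0$, so it has equal amplitudes in magnitude:
\begin{equation}
\ket{\eta'}=\tfrac{1}{\sqrt2}\left(\ket0+e^{i\theta}\ket1\right),
\end{equation}
up to a global phase. This equals $Z(\theta)\ket{+}$ up to a global phase. Hence $\ket\eta=CZ(\theta)\ket+$.

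The step needing the most care is justifying that $M$ is genuinely a product $A\otimes B$ with both factors nontrivial, and that no further structure is hidden. Here I would invoke the earlier convention that elements of $\mathcal M$ supported only on the input register are excluded, since they would reveal information, and likewise those supported only on the resource register, since they would just modify the resource into a stabilizer state. On a one-qubit resource register, a measurement acting trivially on the resource touches only the input and is excluded. If one instead allowed ancillary $\ket0$ qubits prepared as free operations, I would fold them into the input/Clifford side. Single-qubit Pauli factors on the resource then still give $\bra\eta A\ket\eta=0$, because any input-side Pauli factor $B\neq I$ has an eigenstate among allowed inputs.

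The remaining point is that the theorem asserts only that the state has this form. Non-Cliffordness of $U_{\bm 0}$ just excludes the stabilizer angles, and is not needed for the stated conclusion.
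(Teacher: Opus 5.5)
Your proof is correct. Its first half is exactly the paper's: you factorize Eq.~\eqref{eq:state_preserving} as $\bra{\eta}A\ket{\eta}\Tr[B\rho_{\mathrm{in}}]=0$ and choose an eigenstate of $B$ to force $\bra{\eta}A\ket{\eta}=0$. The second half takes a different route. The paper writes $\ket{\eta}=X(\phi)Z(\theta)\ket{+}$, after absorbing a Clifford and assuming $\theta\notin(\pi/2)\mathbb Z$. It then computes the three Bloch components $\cos\theta$, $\cos\phi\sin\theta$, $\sin\phi\sin\theta$ and argues by contradiction that $\phi\in(\pi/2)\mathbb Z$. You instead rotate the vanishing axis $A$ to $Z$ by a Clifford and read off equal-magnitude amplitudes. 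Your version is shorter and needs no case assumption on $\theta$. It is also exactly the $n=1$ instance of the (1)$\Rightarrow$(2) argument in Lemma~\ref{lem:uniform-diagonal-state}, so Theorem~\ref{th:multiqubit-teleportable} reads as its direct generalization. The paper's explicit parametrization, in turn, shows at a glance which states are excluded: those whose three Bloch components are all nonzero, such as $\ket{F}$.

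One caveat: drop the aside about ancillas. If $\ket{0}$ ancillas are folded into the input side, the inputs are no longer arbitrary. If $B$ has an $X$ or $Y$ factor on an ancilla qubit, then $\Tr[B(\ketbra{0}{0}\otimes\rho_{\mathrm{in}})]=0$ for every $\rho_{\mathrm{in}}$, so Eq.~\eqref{eq:state_preserving} imposes no constraint on $\ket{\eta}$. The justification you give there therefore fails. It is also unnecessary, because in the paper's model the measured register is exactly the single resource qubit, and that is the setting both your main argument and the paper's proof address.
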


Theorem~\ref{th:single-qubit-teleportable} shows that not every single-qubit non-stabilizer state is useful for implementing non-Clifford gates via MGT protocols.
Concretely, the useful single-qubit non-stabilizer states lie on the three great circles of the Bloch sphere, as illustrated in Fig.~\ref{fig:Structural}(f).
Consequently, the state $\ketbra{F}{F}=\frac{1}{2}(I+(X+Y+Z)/\sqrt{3})$ distilled from the five-qubit magic state distillation protocol~\cite{bravyi_universal_2005} is not useful for any MGT protocols in our consideration.
This contrasts with the stabilizer-polytope viewpoint~\cite{vandamNoiseThresholdsHigher2010,howardQuditVersionsQubit2012,heinrich_robustness_2019}, in which $\ket{F}$ is an extremal single-qubit non-stabilizer state that is farthest from the stabilizer octahedron.
To use the $\ket{F}$ state, an additional parity measurement and post-selection may be performed on two $\ket{F}$ states to convert them into a $Z(\pi/3)\ket{+}$ state~\cite{bravyi_universal_2005}.

\begin{proof}
Any single-qubit state can be written as $\ket{\eta}=CX(\phi)Z(\theta)\ket{+}$, where $C$ is a Clifford operator  and $\phi,\theta$ are rotation angles~\cite{nielsenQuantumComputationQuantum2010}.
Since the Clifford $C$ can be absorbed into the Clifford part of the MGT protocol, we take $C=I$ without loss of generality. For non-stabilizer state $\ket{\eta}$ we assume that the diagonal component is non-Clifford, i.e., $\theta\notin(\pi/2)\mathbb Z$. 

Consider the backpropagated measurement $M=A\otimes B$ appearing in an MGT protocol, where $A$ and $B$ are supported on the resource and input register, respectively.
Both $A$ and $B$ have to be nontrivial; otherwise, one will not be implementing an MGT protocol, but rather measuring either the resource or input state.
The condition in Eq.~\eqref{eq:state_preserving} requires that
\begin{equation}
    \Tr\!\left[M(\ketbra{\eta}{\eta}\otimes\rho_{\mathrm{in}})\right]
    =
    \bra{\eta}A\ket{\eta}\Tr[B\rho_{\mathrm{in}}] = 0,
    \label{eq_extra}
\end{equation}
for any input state $\rho_{\mathrm{in}}$. Thus, we must have $\bra{\eta}A\ket{\eta}=0$.

We now show by contradiction that an MGT protocol forces $\phi=(\pi/2)\mathbb Z$.
Let us assume that $\phi\notin(\pi/2)\mathbb Z$.
For $\ket{\eta}=X(\phi)Z(\theta)\ket{+}$, the expectation values for all possible non-trivial $A\in\{X,Y,Z\}$ are given by
\begin{equation}
\begin{split}
    \bra{\eta}X\ket{\eta}
    &=
    \cos\theta,\\
    \bra{\eta}Y\ket{\eta}
    &=
    \cos\phi\sin\theta,\\
    \bra{\eta}Z\ket{\eta}
    &=
    \sin\phi\sin\theta .
\end{split}
\end{equation}
Since both $\phi,\theta\notin(\pi/2)\mathbb Z$, none of these expectation values will be zero. Therefore, no choice of nontrivial Pauli $A\in\{X,Y,Z\}$ can satisfy Eq.~\eqref{eq_extra}, contradicting the existence of an MGT measurement that preserves arbitrary input states. Hence, we must have $\phi\in(\pi/2)\mathbb Z$, which means that $X(\phi)$ can be absorbed into the Clifford operator $C$. Therefore, $\ket{\eta} = CZ(\theta)\ket{+}$.
\end{proof}

\subsection{Multi-qubit states}

We would like to have a generalized version of Theorem~\ref{th:single-qubit-teleportable} that is applicable to multi-qubit resource states.
To acheive that, the following lemma will be useful.

\begin{lemma}
Let $\ket{\eta}$ be an $n$-qubit pure state. The following statements are equivalent.
\begin{enumerate}
\item There exists a set of $n$ independent commuting Pauli operators 
$\mathcal{R}=\{R_1,\ldots,R_n\}$, such that 
\begin{equation*}
    \bra{\eta} R_j \ket{\eta} = 0,
    \quad \forall R_j\in \mathcal{R}.
\end{equation*}

\item There exists a Clifford operator $C$ and a unitary diagonal in the computational basis $D=\sum_{\bm x\in\{0,1\}^n}e^{i\theta_{\bm x}}\ketbra{\bm x}{\bm x}$, such that $  \ket{\eta}=CD\ket{+}^{\otimes n}$.
\item There exists a set of non-trivial commuting Pauli operators $\mathcal{P}$ and a stabilizer state $\ket{s}$ that is not stabilized by any elements in $\mathcal{P}$, such that $\ket{\eta} = \mathcal{P}(\bm \theta)\ket{s}$.
\end{enumerate}
\label{lem:uniform-diagonal-state}
\end{lemma}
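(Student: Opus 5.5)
The plan is to prove the cycle $(2)\Rightarrow(3)\Rightarrow(1)\Rightarrow(2)$. The last step is the one I expect to be hard, and I am not sure it holds as worded.

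\emph{$(2)\Rightarrow(3)$.} I expand the phase function $\theta_{\bm x}$ in the Walsh--Fourier basis. This writes any diagonal unitary, up to a global phase, as a product of commuting $Z$-type rotations, $D=\prod_{\bm a\neq \bm 0} Z_{\bm a}(\phi_{\bm a})$, where $Z_{\bm a}=\prod_{i:a_i=1}Z_i$. Conjugating by $C$ gives
\begin{equation*}
C D\ket{+}^{\otimes n}=\mathcal P(\bm\theta)\ket{s},\qquad \ket{s}=C\ket{+}^{\otimes n},\quad \mathcal P=\{C Z_{\bm a}C^\dagger\}.
\end{equation*}
These Paulis are nontrivial and pairwise commuting. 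None of them stabilizes $\ket s$, because no nontrivial $Z$-type Pauli stabilizes $\ket{+}^{\otimes n}$. Terms with $\phi_{\bm a}\in 2\pi\mathbb Z$ can simply be dropped.

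\emph{$(3)\Rightarrow(1)$.} Let $G$ be a maximal abelian subgroup of the Pauli group, taken modulo phases, that contains $\langle\mathcal P\rangle$; it has rank $n$. For any $R\in G$, $R$ commutes with $\mathcal P(\bm\theta)$, so $\bra\eta R\ket\eta=\bra s R\ket s$. This vanishes exactly when $R\notin\pm\mathcal S(\ket s)$.

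Let $H=G\cap\pm\mathcal S(\ket s)$. Then $H$ is a proper subgroup of $G$, since $P_1\in G\setminus H$. A linear-algebra fact over $\mathbb F_2$ now gives a basis of $G$ that avoids $H$: start from any basis containing some $g\notin H$, and replace each basis element $h$ lying in $H$ by $hg$. These $n$ generators form $\mathcal R$.

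\emph{$(1)\Rightarrow(2)$.} Since the $R_j$ are independent and commuting, there is a Clifford $C$ with $C Z_j C^\dagger=\pm R_j$. Setting $\ket\varphi=C^\dagger\ket\eta$, we get $\bra\varphi Z_j\ket\varphi=0$ for every $j$. To conclude $\ket\varphi=D\ket{+}^{\otimes n}$ I need $|\langle\bm x|\varphi\rangle|^2=2^{-n}$ for all $\bm x$. That requires $\langle Z_{\bm a}\rangle=0$ for \emph{all} $\bm a\neq\bm 0$, whereas the generator condition only forces the single-qubit marginals to be uniform.

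This gap is the main obstacle. My first attempt would be to exploit the freedom in choosing $\mathcal R$ and $C$: among the maximal abelian groups containing elements with vanishing expectation, look for one whose every nontrivial element has vanishing expectation. Alternatively, compose $C$ with a CNOT-type Clifford so that the correlated parities become generators.

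Before investing in this, I would test the two-qubit family $\sqrt{a}(\ket{00}+\ket{11})+\sqrt{1/2-a}\,(\ket{01}+\ket{10})$ with, say, $a=0.4$. It satisfies (1) with $\mathcal R=\{Z_1,Z_2\}$, but $\langle Z_1Z_2\rangle\neq0$. If I can show that no two-qubit Clifford maps it to a state with uniform-magnitude amplitudes, then (1) as worded would be too weak. In that case the lemma should be read with (1) requiring the whole group $\langle\mathcal R\rangle\setminus\{I\}$ to have zero expectation. That stronger version is exactly what the $(3)\Rightarrow(1)$ argument above does not automatically deliver, since it only controls the generators.
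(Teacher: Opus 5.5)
Your diagnosis of $(1)\Rightarrow(2)$ is correct, and the gap cannot be closed, because the implication is false as worded. The paper's own proof makes exactly the step you declined to make. It asserts that $\bra{\eta}R_j\ket{\eta}=0$ for the $n$ generators ``is equivalent to the distribution of measurement outcomes of $\{Z_j\}$ being uniform''. In fact this only makes the single-qubit marginals uniform. Your test state will not exhibit the failure, though. Its amplitudes are real (it equals $\sqrt{0.9}\ket{++}+\sqrt{0.1}\ket{--}$), so every Pauli with an odd number of $Y$ factors has zero expectation. Hence $Z_1$, $Y_2$ and $Z_1Y_2$ all vanish. With $C=I\otimes SH$ (so that $CZ_2C^\dagger=Y_2$), the state $C^\dagger\ket{\psi}$ has all four amplitudes of modulus $1/2$, so (2) holds. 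A complex phase is needed. For
\begin{equation*}
\ket{\psi}=\tfrac{1}{\sqrt{10}}\left(2\ket{00}+\ket{01}+\ket{10}+2i\ket{11}\right),
\end{equation*}
one computes $\langle Z_1\rangle=\langle Z_2\rangle=0$, $\langle Z_1Z_2\rangle=3/5$, $\langle XX\rangle=\langle YY\rangle=1/5$ and $\langle XY\rangle=\langle YX\rangle=4/5$. The eight Paulis with exactly one factor in $\{X,Y\}$ give $\pm 2/5$, and the squares sum to $3$, as purity requires. Statement (2) holds if and only if some maximal abelian Pauli group, namely $\{CZ_{\bm a}C^\dagger\}$, has vanishing expectation on all three of its non-identity elements. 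Here only $Z_1$ and $Z_2$ vanish, so (2) fails, although (1) holds with $\mathcal R=\{Z_1,Z_2\}$. Theorem~\ref{th:multiqubit-teleportable} inherits this defect, since it feeds only $\bra{\eta}A_j\ket{\eta}=0$ into the lemma.

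The right repair is the one you suggest: require $\bra{\eta}R\ket{\eta}=0$ for \emph{every} non-identity $R\in\langle\mathcal R\rangle$. Then $(1)\Leftrightarrow(2)$ follows immediately from the Fourier argument. To reach this strengthened (1) from (3), you cannot use an arbitrary maximal abelian group containing $\langle\mathcal P\rangle$. Hypothesis (3) only rules out $P_j\ket{s}=\ket{s}$, so $\langle\mathcal P\rangle$ may still meet $\pm\mathcal S(\ket{s})$; for example, take $\mathcal P=\{X_1,X_2\}$ with $\ket{s}=(\ket{00}+\ket{11})/\sqrt2$. Then every such group contains an element with expectation $\pm1$. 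Instead, note that elements of $G_0=\langle\mathcal P\rangle\cap\pm\mathcal S(\ket{s})$ act as fixed signs on $\mathrm{span}\{g\ket{s}:g\in\langle\mathcal P\rangle\}$. So each $P_j$ can be replaced by its representative in a complement $G_1$ of $G_0$, possibly flipping the sign of $\theta_j$. Next, extend $G_1$ to a maximal abelian group $L$ with $L\cap\pm\mathcal S(\ket{s})=\{I\}$, which is always possible. Every non-identity $R\in L$ then gives $\bra{\eta}R\ket{\eta}=\bra{s}R\ket{s}=0$. The same caveat affects your weak-version argument, where ``$P_1\in G\setminus H$'' silently assumes $P_1\ket{s}\neq-\ket{s}$. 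It also affects the paper's $(3)\Rightarrow(2)$ step: a Clifford that diagonalizes $\mathcal P$ while sending $\ket{s}$ to $\ket{+}^{\otimes n}$ exists only when $\langle\mathcal P\rangle\cap\pm\mathcal S(\ket{s})=\{I\}$. Your $(2)\Rightarrow(3)$ is the paper's argument.
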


\begin{proof}
(1) $\rightarrow$ (2).
Since $\mathcal{R}$ is a maximal set of independent commuting Pauli operators, there is a Clifford unitary $C$ such that $C^\dagger R_j C=Z_j$ for all $j$.
Let $\ket{\phi}=C^\dagger\ket{\eta}$ and write
\begin{equation}
    \ket{\phi}=\sum_{\bm x\in\{0,1\}^n}a_{\bm x}\ket{\bm x}.
\end{equation}
Then, the assumption that every $ R_j$ has zero expectation value on $\ket{\eta}$ is then equivalent to the distribution of measurement outcomes of $\{Z_j\}$ being uniform.
In other words, for every $\bm x$, $a_{\bm x}=2^{-n/2}e^{i\theta_{\bm x}}$ for some $\theta_{\bm x}$ and we obtain
\begin{equation}
    \ket{\phi}
    =
    \left(\sum_{\bm x}e^{i\theta_{\bm x}}\ketbra{\bm x}{\bm x}\right)
    \ket{+}^{\otimes n}
    =
    D\ket{+}^{\otimes n}.
\end{equation}
Thus, $\ket{\eta}=CD\ket{+}^{\otimes n}$.

(2) $\rightarrow$ (1). If $\ket{\eta}=CD\ket{+}^{\otimes n}$, let $\mathcal R=\{CZ_jC^\dagger\}_{j=1}^n$.
It is then easy to verify that 
\begin{equation}
    \bra{\eta}CZ_jC^\dagger\ket{\eta}= 0, \quad \forall j.
\end{equation}

(2) $\rightarrow$ (3). Since $D$ is diagonal, we can decompose $D = \prod_j P^{\mathrm{z}}_j(\theta_j)$, where each $P^{\mathrm{z}}_j$ is multi-qubit Pauli operator that is a tensor product of the identity and Pauli $Z$. Therefore,

\begin{equation}
\ket{\eta} = C^\dagger D \ket{+}^{\otimes n}
= \prod_j (C^\dagger P^{\mathrm{z}}_j(\theta_j) C)  C^\dagger \ket{+}^{\otimes n}.
\end{equation}
By setting $\mathcal{P} = \{C^\dagger P^{\mathrm{z}}_j C\}$ and $\ket{s} = C^\dagger \ket{+}^{\otimes n}$, we obtain $\ket{\eta} = \mathcal{P}(\bm \theta)\ket{s}$.

(3) $\rightarrow$ (2). Because $\ket{s}$ is not stabilized by any element in $\mathcal{P}$, there exists a Clifford unitary $C$ that diagonalizes every element in $\mathcal{P}$ and map $\ket{s}$ to $\ket{+}^{\otimes n}$~\cite{gottesman_stabilizer_1997}.  Hence we have $ \ket{s}=C\ket{+}^{\otimes n}$ and $C^\dagger \mathcal{P}(\bm \theta) C$ is diagonal. Let $D = C^\dagger \mathcal{P}(\bm \theta) C$, we therefore have $\ket{\eta}= \mathcal{P}(\bm \theta)\ket{s} = C D\ket{+}^{\otimes n}.$
\end{proof}

\begin{theorem}
Let $\ket{\eta}$ be an $n$-qubit pure state.
If there exists an MGT protocol using $\ket{\eta}$ with the backpropagated measurements $\mathcal M = \{A_j\otimes B_j\}$, such that all $A_j$ commute with each other, then $\ket{\eta}$ is Clifford-equivalent to a diagonal state, i.e.,
\begin{equation}
    \ket{\eta}=C D\ket{+}^{\otimes n},
\end{equation}
where $C$ is a Clifford operator and $D$ is a unitary diagonal in the computational basis.
\label{th:multiqubit-teleportable}
\end{theorem}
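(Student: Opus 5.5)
The plan is to reduce the theorem to condition (1) of Lemma~\ref{lem:uniform-diagonal-state}. That is, from the MGT protocol we extract $n$ independent commuting Pauli operators on the resource register, each with zero expectation value on $\ket{\eta}$. The implication (1)$\rightarrow$(2) of that lemma then gives $\ket{\eta}=CD\ket{+}^{\otimes n}$ directly.

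\textbf{Step 1: each $A_j$ has zero expectation.} First, exactly as in the proof of Theorem~\ref{th:single-qubit-teleportable}, apply Eq.~\eqref{eq:state_preserving} to each $M_j=A_j\otimes B_j$. Since $\Tr[M_j\,\ketbra{\eta}{\eta}\otimes\rho_{\mathrm{in}}]=\bra{\eta}A_j\ket{\eta}\Tr[B_j\rho_{\mathrm{in}}]$ must vanish for every $\rho_{\mathrm{in}}$, and $B_j$ is a non-trivial Pauli, we can choose $\rho_{\mathrm{in}}$ to be an eigenstate of $B_j$ so that $\Tr[B_j\rho_{\mathrm{in}}]=\pm1$. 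This forces $\bra{\eta}A_j\ket{\eta}=0$.

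\textbf{Step 2: products of the $A_j$ also have zero expectation.} The condition extends to the whole group generated by $\mathcal M$. Measuring $\mathcal M$ in an MGT protocol reveals nothing about $\rho_{\mathrm{in}}$, so the outcome distribution of the commuting measurements must be independent of $\rho_{\mathrm{in}}$. Because the $A_j$ commute among themselves and the $M_j$ commute, the $B_j$ also commute pairwise. For any nonempty subset $S$, the product $\prod_{j\in S}M_j$ equals $\pm\prod_{j\in S} A_j\otimes\prod_{j\in S} B_j$, and its expectation value must vanish for all inputs, not merely for one input. I would justify this either through the uniform-outcome requirement or by applying Eq.~\eqref{eq:state_preserving} to the elements of the group generated by $\mathcal{M}$, since those are also backpropagated parity checks of the Pauli $Z$ outcomes. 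Whenever $\prod_{j\in S} B_j$ is non-trivial, choosing $\rho_{\mathrm{in}}$ to be its eigenstate gives $\bra{\eta}\prod_{j\in S}A_j\ket{\eta}=0$. The case where $\prod_{j\in S} B_j\propto I$ but $\prod_{j\in S} A_j$ is non-trivial is excluded by the assumption that no element of the measurement group acts on only one register; I would argue this is the same WLOG assumption stated for $\mathcal{M}$, applied to its generated group. The products $\prod_{j\in S}A_j$ cannot be trivial themselves, for the symmetric reason.

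\textbf{Step 3: reaching $n$ generators.} Step 2 shows that the $A_j$ generate an abelian group all of whose non-identity elements have zero expectation on $\ket{\eta}$. Equivalently, the distribution of their joint outcomes is uniform, so the $A_j$ are independent. We then need $n$ of them. If there are fewer than $n$, say $r$, I would extend them to a maximal commuting set. This is the main obstacle: extra Paulis need not have zero expectation. My first attempt would be to argue that a non-Clifford $U_{\bm 0}$ requires the protocol to measure out the entire resource register. Every resource qubit is measured in $Z$, so $|\mathcal{M}|$ equals the number of resource qubits, $n$. It remains to show the $A_j$ are independent, which follows from Step 2 because a product of $A_j$ equal to the identity would have expectation $1\neq0$. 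With $n$ independent commuting $A_j$ of zero expectation, Lemma~\ref{lem:uniform-diagonal-state} finishes the proof.
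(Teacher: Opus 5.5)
Your proof is correct and has the same skeleton as the paper's. You get $\bra{\eta}A_j\ket{\eta}=0$ from Eq.~\eqref{eq:state_preserving}. The $A_j$ are independent because no product of elements of $\mathcal M$ may act on a single register. $|\mathcal M|=n$ because every resource qubit is measured; this is built into the definition of MGT, so the detour through ``a non-Clifford $U_{\bm 0}$ requires\ldots'' is unnecessary. Lemma~\ref{lem:uniform-diagonal-state} then finishes.

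The genuine difference is your Step~2, which the paper skips. The paper derives zero mean only for the generators $A_j$ and then invokes (1)$\rightarrow$(2). The lemma's proof equates ``each $R_j$ has zero mean'' with ``the joint distribution of the $Z_j$ outcomes is uniform.'' That equivalence requires $\langle\prod_{j\in S}R_j\rangle=0$ for every nonempty $S$; zero marginals do not force, e.g., $\langle Z_1Z_2\rangle=0$. A dimension count shows that statement (1) as written does not imply (2): states with $\langle Z_1\rangle=\langle Z_2\rangle=0$ form a four-parameter family, while states $CD\ket{+}^{\otimes 2}$ form finitely many three-parameter families. So your Step~2 is not redundant; it supplies exactly the hypothesis under which the lemma's argument works. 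You should therefore conclude by running that argument directly, rather than quoting statement (1) verbatim: the $n$ independent commuting $A_j$ have uniform joint outcomes, so all amplitudes of $C^\dagger\ket{\eta}$ have modulus $2^{-n/2}$.

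Your reading of the WLOG assumption at the level of the group generated by $\mathcal M$ is likewise necessary, not cosmetic. Take a one-qubit input and $\mathcal M=\{Z_1\otimes Z,\,Z_2\otimes Z\}$, i.e., two CNOTs from the input onto the resource qubits. Each generator acts nontrivially on both registers. Every two-qubit resource with $|\langle 00|\eta\rangle|=|\langle 11|\eta\rangle|$ and $|\langle 01|\eta\rangle|=|\langle 10|\eta\rangle|$ then gives input-independent outcome statistics and a heralded $Z$-rotation for each outcome. Yet such states are generically not Clifford-equivalent to diagonal states. The offending element is $Z_1Z_2\otimes I$, which is exactly the case you exclude.
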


\begin{figure*}[ht!]
    \centering
    \includegraphics[width=.95\linewidth]{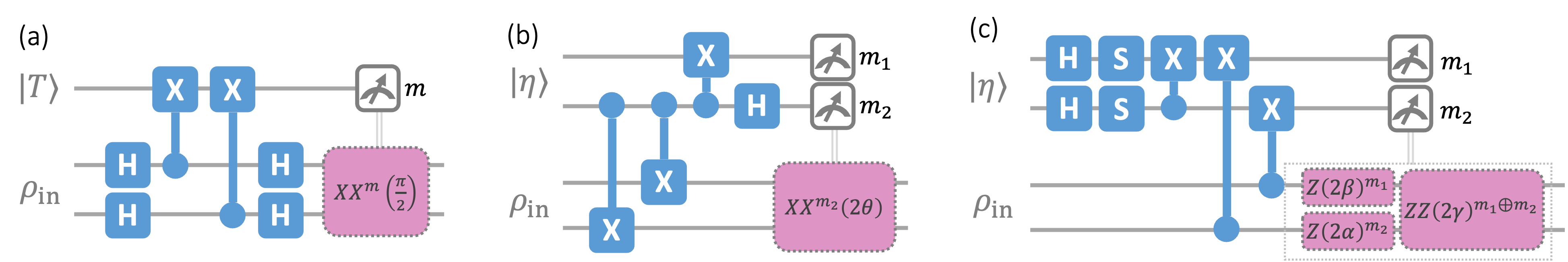}
    \caption{
    Circuit implementation of deterministic MGT protocols obtained using Algorithm~\ref{alg:mgt-construction}.
    Panels (a), (b) and (c) correspond to Examples~\ref{example:0},~\ref{example:a}~and~\ref{example:b}, respectively.
    }
    \label{fig:implementations}
\end{figure*}

Although Theorem~\ref{th:multiqubit-teleportable} considers a class of MGT protocols satisfying the additional assumption on the commutativity of all $A_j$, this class is still broad enough to include all known MGT protocols.
Within this class, any useful resource state must be Clifford-equivalent to a diagonal state.
It thus highlights a distinction between non-stabilizer states as resources in general resource theories~\cite{veitchResourceTheoryStabilizer2014,howard_application_2017,heinrich_robustness_2019,beverland_lower_2020,wangEfficientlyComputableBounds2020,leone_stabilizer_2022} and their usefulness in MGT protocols.
Moreover, connected by Lemma~\ref{lem:uniform-diagonal-state} (equivalence between statements 2 and 3), Theorem~\ref{th:multiqubit-teleportable} also shows that any useful resource states in MGT must fall into the class of resource states considered in Theorem~\ref{th:1}. Therefore, Theorem~\ref{th:1} also covers a broad scope of of MGT protocols.

\begin{proof}

Recall that every element in $\mathcal{M}$ is independent and has non-trivial support on both the resource and input register.
Therefore, all $A_j$ have to be independent; otherwise, some product of elements from $\mathcal M$ would be supported only on the input register.
The condition in Eq.~\eqref{eq:state_preserving} leads to
\begin{equation}
    \bra{\eta}A_j\ket{\eta} \Tr[B_j\rho_{\mathrm{in}}]= 0
\end{equation}
for any $j$ and arbitrary input $\rho_{\mathrm{in}}$.
Thus, for all $j$, $\bra{\eta}A_j\ket{\eta} = 0$.
Using the equivalence between statements 1 and 2 in Lemma~\ref{lem:uniform-diagonal-state}, we therefore have $\ket{\eta}=CD\ket{+}^{\otimes n}$ for a Clifford operator $C$ and a diagonal unitary $D$.
\end{proof}

\section{Circuit implementation of MGT}
\label{sec:circuit-construction}

We now describe an explicit implementation of MGT.
In principle, a deterministic MGT protocol can be implemented by applying a Clifford circuit to both the resource and input register, measuring qubits of the resource register in the $Z$ basis, and applying the corresponding feedforward operator from Corollary~\ref{cor:deterministic-feedforward} to the output qubits.
Although we have been treating Clifford operations as free operations, in practice, they still contribute to the physical implementation cost.
It is therefore favorable to realize MGT protocols with optimized Clifford circuits, and to implement the neighboring Clifford gates as a change of basis for the teleported non-Clifford gates whenever possible.
Indeed, in a circuit comprising interleaved Clifford $C_j$ and non-Clifford $U_j$ gates, each $C_j$ can be propagated forward so that only a final accumulated Clifford gate remains, e.g.,
\begin{equation}
     U_iC_i \ldots  U_1C_1
    =
    D_i(D^\dagger_i U_i D_i) \ldots (D^\dagger_1 U_1 D_1) .
\end{equation}
where $D_j=C_j\ldots C_1$.
Thus, except for the final accumulated Clifford gate $D_i$, one can implement the circuit by teleporting Clifford-conjugated non-Clifford gates $D^\dagger_j U_j D_j$ without implementing the intermediate Clifford gates.
Motivated by this observation, we provide the following Algorithm~\ref{alg:mgt-construction} that constructs a circuit implementation of MGT protocols.

\begin{center}
\begin{minipage}{0.95\columnwidth}
\hrule
\vspace{0.6ex}
\refstepcounter{algorithm}
\noindent {\bf{Algorithm~\thealgorithm.}} Constructing a circuit implementation of a deterministic MGT protocol
\hrule
\label{alg:mgt-construction}
\vspace{0.4ex}
\begin{algorithmic}[1]
\Require A resource state $\ket{\eta} = \mathcal{P}(\bm \theta) \ket{s}$ that satisfies the condition in Theorem~\ref{th:1}, and a conjugated non-Clifford gate $U_{\bm 0} = C \mathcal{P}(\bm \theta) C^\dagger $ for some Clifford operator $C$.
\Ensure A Clifford circuit $V$; see Fig.~\ref{fig:Structural}(a).
\State Identify a set of non-trivial independent Pauli operators $ \{P'_j\}$ that generate $\mathcal{P}$.
\State Construct the backpropagated measurements $\mathcal{M} = \{P'_j \otimes CP'_jC^\dagger \}$.
\State Obtain the stabilizer code $\mathcal{Q}$ from $\mathcal{M}$ and $\ket{s}$ using the stabilizer tableau update rule.
\State Obtain the decoding circuit $V$ for $\mathcal{Q}$.
\end{algorithmic}
\vspace{0.6ex}
\hrule
\end{minipage}
\end{center}

Let us now discuss Algorithm~\ref{alg:mgt-construction}.
The first two steps are obvious.
In Step 3, we interpret the state $\ketbra{s}{s}\otimes\rho_{\mathrm{in}}$ as a code stabilized by $\mathcal{S}_{\mathrm{s}}\otimes I$, where $\mathcal{S}_{\mathrm{s}}$ is the stabilizer group of $\ket{s}$.
The logical operators can be chosen as $\overline X_j = I\otimes X_j$ and $\overline Z_j = I\otimes Z_j$.
Let the measurements outcomes of $\mathcal{M}$ be $\bm{m} = \{m_j \}$.
Defining $\mathcal{M}(\bm m) = \{(-1)^{m_j} M_j\}$, the new stabilizer code $\mathcal{Q}_{\bm m}$
after measuring $\mathcal{M}$ is stabilized by
$\langle \mathcal{C}(\mathcal{M}), \mathcal{M}(\bm m) \rangle$,
where $\mathcal{C}(\mathcal{M})$ is the set composed of all the elements in $\mathcal{S}_{\mathrm{s}}\otimes I$ that commute with $\mathcal M$.
The logical operators $\{ \overline L\}$ may need to be updated after each measurement to ensure that they commute with the new stabilizer group.
Concretely, when one measures $\tilde M \in \mathcal{M}$, if $\{L, \tilde M\} = 0$, then one selects an arbitrary element $g$ in the stabilizer group before measuring $\tilde M$ such that $\{g, \tilde M\} =0$ and updates $L' = g L$  so that $[L', \tilde M]=0$; otherwise, there is no need to update $L$.
Such an update can be efficiently determined using the tableau simulation~\cite{aaronson_improved_2004}.
In Step 4, we synthesize a Clifford decoding circuit $W_{\mathrm{dec}}$ that decodes logical states in $\mathcal{Q}_{\bm m }$ to input register and stabilizer generators to Pauli $Z$s in ancilla register up to a sign using the algorithm in Ref.~\cite{gottesman_stabilizer_1997}. Since $\mathcal{Q}_{\bm m }$ for different $\bm m$ only differ in the sign of stabilizers, they share the same circuit $V$, and we discuss the details in Appendix \ref{app:gottemans_algorithm}.
Once we have the Clifford circuit, we measure all qubits of the resource register in the $Z$ basis.
Finally, the feedforward operator in Eq.~\eqref{eq: feedforward} should be applied on the decoded state based on the $\bm m$.

Below we provide some concrete examples of deterministic MGT protocols obtained using Algorithm~\ref{alg:mgt-construction}.

\begin{example}
The deterministic MGT protocol that uses $\ket{T}$ and implements the $XX(\frac{\pi}{4})$ gate is obtained by measuring $\mathcal{M}=\{Z_1X_2X_3\}$.
The corresponding stabilizer code $\mathcal{Q}$ is a $[\![3, 2]\!]$ code stabilized by $\mathcal{S} = \langle Z_1X_2X_3\rangle$, with logical operators $\overline X_j = X_{j+1}$ and $\overline Z_j = X_1Z_{j+1}$ for $j=1,2$. The feedforward operator is $XX(\pi/2)$ for outcome $m=1$; see Fig.~\ref{fig:implementations}(a).
\label{example:0}
\end{example}

\begin{example}
The deterministic MGT protocol that uses $\ket{\eta} = XX(\theta)\ket{0}^{\otimes 2}$ and implements the $XX(\theta)$ gate is obtained by measuring $\mathcal{M} = \{X_1X_2X_3X_4,Z_1Z_2\}$. The corresponding stabilizer code $\mathcal{Q}$ is a $[\![4, 2]\!]$ code stabilized by $\mathcal{S} = \langle X_1X_2X_3X_4, Z_1Z_2\rangle$, with logical operators $\overline X_j = X_{j+2}$ and $\overline Z_j = Z_1Z_{j+2}$ for $j=1,2$. The feedforward operator is $XX(2\theta)$ for outcome $m_2=1$; see Fig.~\ref{fig:implementations}(b).
\label{example:a}
\end{example}
\noindent We remark that in the above example, the measurement outcome of the last qubit does not need to be recorded, as it does not affect the feedforward operator.

\begin{example}
The deterministic MGT protocol that uses  $\ket{\eta} = X_1X_2(\alpha)X_1(\beta)X_2(\gamma)\ket{+i}^{\otimes 2}$ and implements the $Z_1(\alpha)Z_2(\beta)Z_1 Z_2(\gamma)$ gate is obtained by measuring $\mathcal{M} = \{X_1X_2Z_3, X_1Z_4\}$.
The corresponding stabilizer code $\mathcal{Q}$ is a $[\![4, 2]\!]$ code stabilized by $\mathcal{S} = \langle X_1X_2Z_3, X_1Z_4\rangle$, with logical operators $\overline X_1 = Y_2X_3$, $\overline X_2 = Y_1Y_2X_4$, and $\overline Z_j = Z_{j+2}$ for $j=1,2$.
The feedforward operator is $Z_1(2\alpha)^{m_1} Z_2(2\beta)^{m_2} Z_1Z_2(2\gamma)^{m_1\oplus m_2}$; see Fig.~\ref{fig:implementations}(c).
\label{example:b}
\end{example}

\section{When feedforward is ``simple''}

We now discuss scenarios when the feedforward operators in MGT are ``simple''.
Concretely, we identify conditions that guarantee Pauli feedforward operators; we also discuss the case of Clifford feedforward operators.
Our findings shed light on the paradigm of algorithmic fault tolerance~\cite{zhou_algorithmic_2024} and allow to reduce the number of feedforward Clifford operators in quantum computing.

\subsection{Criterion for Pauli feedforward}
\begin{theorem}
Consider an MGT protocol described in Theorem~\ref{th:1}.
Suppose that the input state $\rho_{\mathrm{in}}$ is stabilized by a set of Pauli operators $\{G_k\}$, i.e., $G_k\rho_{\mathrm{in}}G_k^\dagger=\rho_{\mathrm{in}}$, such that for all $k$ and $\ell\neq k$
\begin{equation}
\{G_k,P'_k\}=0
\quad\text{and}\quad
[G_k,P'_\ell]=0.
\label{eq_1}
\end{equation}
Then, this MGT protocol can be promoted to a deterministic MGT protocol that implements $U_{\bm 0}=\mathcal{P}(\bm\theta)$ with Pauli feedforward operators
\begin{equation}
F_{\bm m}=\prod_k G_k^{m_k}
\end{equation}
for the outcome ${\bm m} = \{ m_k \}$.
\label{th:feedforward_reduction}
\end{theorem}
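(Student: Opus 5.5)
The approach mirrors the hierarchy part of the proof of Corollary~\ref{cor:deterministic-feedforward}. There we found Pauli operators $R_k$ with $\{R_k,P'_k\}=0$ and $[R_k,P'_\ell]=0$ for $\ell\neq k$, and showed that $U_{\bm m}=R_{\bm m}U_{\bm 0}R_{\bm m}$ with $R_{\bm m}=\prod_k R_k^{m_k}$. The hypotheses in Eq.~\eqref{eq_1} say exactly that the $G_k$ can play the role of the $R_k$. So I would define $G_{\bm m}=\prod_k G_k^{m_k}$ and take it as the feedforward.

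First, using Eq.~\eqref{eq_1} and $P_j=\prod_{k\in\mathcal I_j}P'_k$ (up to phase), I would check that $G_{\bm m}P_jG_{\bm m}^\dagger=(-1)^{q_j}P_j$. Each $G_k$ with $m_k=1$ anticommutes with $P_j$ exactly when $k\in\mathcal I_j$, so the sign is $(-1)^{\sum_{k\in\mathcal I_j}m_k}$. It follows that $G_{\bm m}P_j(\theta_j)G_{\bm m}^\dagger=P_j((-1)^{q_j}\theta_j)$. Multiplying over $j$ and using Eq.~\eqref{eq: heralded_gate} gives $U_{\bm m}=G_{\bm m}U_{\bm 0}G_{\bm m}^\dagger$.

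Second, I would use that the input is stabilized: $G_{\bm m}\rho_{\mathrm{in}}G_{\bm m}^\dagger=\rho_{\mathrm{in}}$, since conjugation by each $G_k$ fixes $\rho_{\mathrm{in}}$ and compositions preserve this. The heralded output of Theorem~\ref{th:1} is then
\begin{equation*}
U_{\bm m}\rho_{\mathrm{in}}U_{\bm m}^\dagger
= G_{\bm m}U_{\bm 0}G_{\bm m}^\dagger\rho_{\mathrm{in}}G_{\bm m}U_{\bm 0}^\dagger G_{\bm m}^\dagger
= G_{\bm m}U_{\bm 0}\rho_{\mathrm{in}}U_{\bm 0}^\dagger G_{\bm m}^\dagger .
\end{equation*}
Applying $F_{\bm m}=G_{\bm m}^\dagger$, which equals $\prod_k G_k^{m_k}$ up to a global phase and ordering signs, yields $U_{\bm 0}\rho_{\mathrm{in}}U_{\bm 0}^\dagger$ for every outcome $\bm m$. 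This is a deterministic MGT protocol with Pauli feedforward.

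The main subtlety is conceptual rather than computational. Eq.~\eqref{eq:deterministic-feedforward-condition} asks for $F_{\bm m}U_{\bm m}=U_{\bm 0}$ as operators, whereas here the identity holds only on the relevant states. I would state explicitly that determinism is required only at the level of the output state for the given (restricted) input. Equivalently, $F_{\bm m}U_{\bm m}=U_{\bm 0}G_{\bm m}^\dagger$, and the trailing $G_{\bm m}^\dagger$ acts trivially on $\rho_{\mathrm{in}}$ by hypothesis. I would also briefly note that Hermiticity and phases of the Pauli products are harmless, because conjugation ignores global phases.
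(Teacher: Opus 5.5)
Your proposal is correct and follows essentially the same route as the paper. Both show that conjugation by $\prod_k G_k^{m_k}$ sends $P_j\mapsto(-1)^{q_j}P_j$, which intertwines $U_{\bm m}$ with $U_{\bm 0}$ (the paper writes this as $F_{\bm m}U_{\bm m}=U_{\bm 0}F_{\bm m}$, equivalent to yours since $G_{\bm m}^\dagger\propto G_{\bm m}$), and then use that this Pauli stabilizes $\rho_{\mathrm{in}}$ to absorb the leftover factor; your caveat that determinism holds only for the specified input matches the paper's own concluding remark.
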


\begin{proof}
From Eq.~\eqref{eq: logical_effect}, the heralded output state before the feedforward operator is $\rho_{\mathrm{out}}(\bm m)=U_{\bm m}\rho_{\mathrm{in}}U_{\bm m}^\dagger$, where
$U_{\bm m} = \prod_j P_j((-1)^{q_j}\theta_j)$.
Since $P_j=\prod_{k\in\mathcal I_j}P'_k$, the commutation conditions in Eq.~\eqref{eq_1} imply
\begin{equation}
    F_{\bm m}P_jF_{\bm m}^\dagger=(-1)^{q_j}P_j .
\end{equation}
Therefore, $F_{\bm m } P_j((-1)^{q_j}\theta_j) = P_j(\theta_j) F_{\bm m }$ for every $j$, and we have

\begin{equation}
    F_{\bm m } U_{\bm m } = U_{\bm 0} F_{\bm m }.
\end{equation}
Applying $F_{\bm m}$ to $ \rho_{\mathrm{out}}(\bm m)$ yields
\begin{equation}
    F_{\bm m} \rho_{\mathrm{out}}(\bm m) F_{\bm m}^\dagger
    = U_{\bm 0} F_{\bm m} \rho_{\mathrm{in}} F_{\bm m}^\dagger U^\dagger_{\bm 0}
    = U_{\bm 0} \rho_{\mathrm{in}} U^\dagger_{\bm 0},
\end{equation}
where we used the fact that every $F_{\bm m}$ also stabilizes $\rho_{\mathrm{in}}$. Therefore, the feedforward operators $\{F_{\bm m}\}$ result in a deterministic MGT protocol for the specified input state $\rho_{\mathrm{in}}$.
\end{proof}

Remarkably, if $\rho_{\mathrm{in}}$ is a stabilizer state, then the condition on the input state in Theorem~\ref{th:feedforward_reduction} can always be satisfied for MGT protocols for which the generators $\{P'_j \}$ do not stabilize $\rho_{\mathrm{in}}$.
(Note that if one of the generators $\{P'_j \}$ stabilizes $\rho_{\mathrm{in}}$, then the part of the implemented gate that corresponds to that generator will act trivially on the input.)
Therefore, one can choose a set of commuting stabilizers $\{G'_k\}$ of $\rho_{\mathrm{in}}$
satisfying Eq.~\eqref{eq_1}.
Theorem~\ref{th:feedforward_reduction} therefore implies that, for stabilizer input states, such MGT protocols can always be promoted to deterministic MGT with Pauli feedforward.

\begin{example}
    Consider a deterministic MGT protocol that uses $\ket{T}$ and implements the $T$ gate.
    For the input state $\rho_{\mathrm{in}}=\ketbra{+}{+}$, the feedforward operator $S^m$ in Example~\ref{example:1} can be replaced by a Pauli operator $X^m$ . This is because $X$ stabilizes $\ket{+}$ and $\{X,Z\}=0$.
\end{example}

\begin{example}
    Consider a deterministic MGT protocol that uses $Z_1(\beta) Z_2(\theta)\ket{+}^{\otimes 2}$ and implements the $Z_1(\theta) Z_2(\beta)$.
    For the input state $\ket{s}$ stabilized by $\langle X_1Z_2, Z_1X_2 \rangle$, the feedforward operators can be replaced by $(X_1Z_2)^{m_2} (Z_1X_2)^{m_1}$; see Fig.~\ref{fig:tft_pauli_feedforward}(c).
\end{example}

% -------------------------------------
\begin{figure*}[ht!]
    \centering
    \includegraphics[width=.92\linewidth]{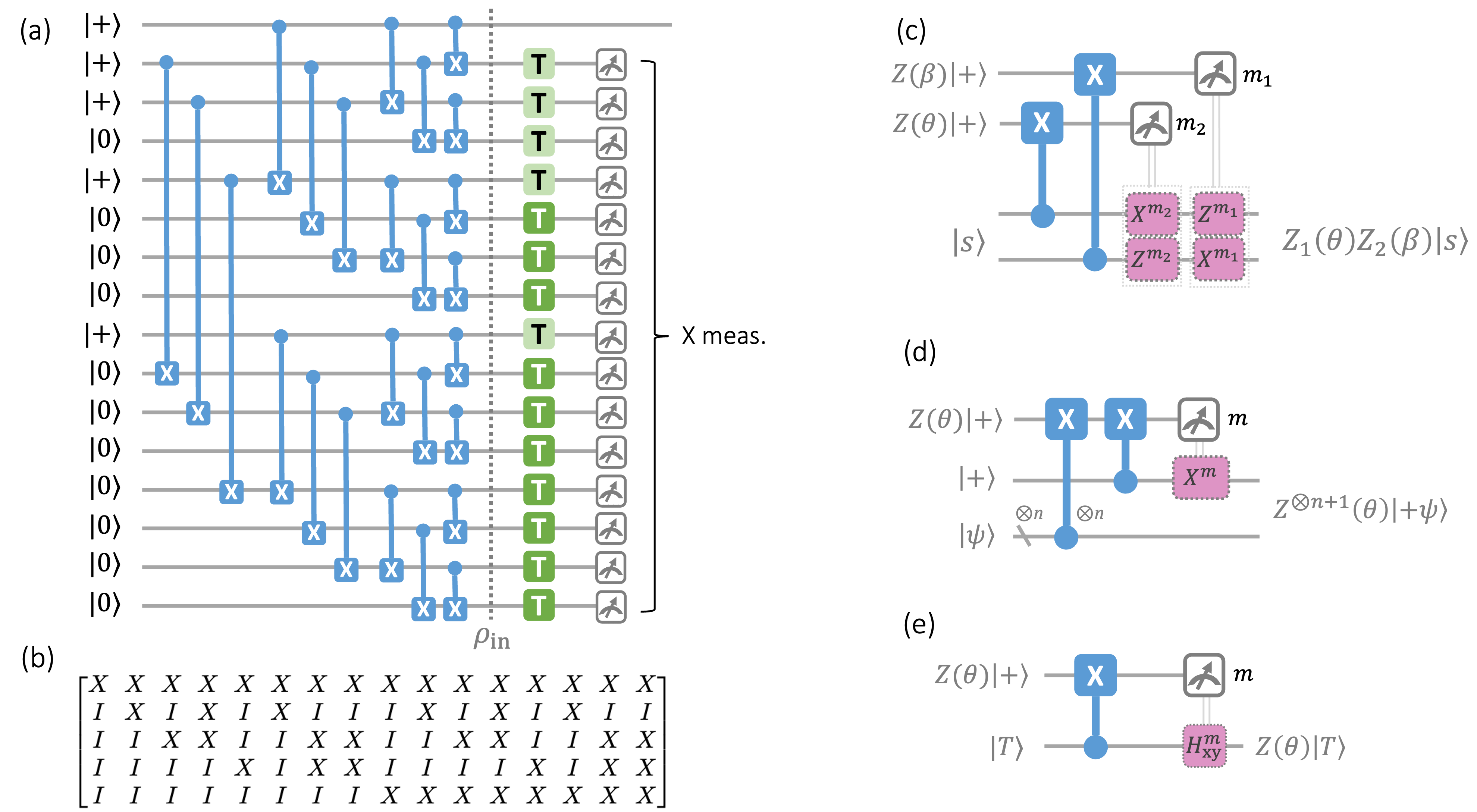}
\caption{
(a) The 15-to-1 distillation circuit (adopted from Ref.~\cite{beverland_cost_2021}), with the stabilizer state $\rho_{\mathrm{in}}$ immediately before the layer of $T$ gates.
Qubits are labeled 1 through 16 from the top to the bottom.
Five T gates indicated with black text can be implemented using deterministic MGT with Pauli feedforward operators.
In (b), we specify five $X$-type stabilizers of $\rho_{\mathrm{in}}$.
(c) A two-qubit example illustrating a ``simple'' feedforward operator, i.e., for a suitably stabilized input state $\ket{s}$ a feedforward operator in the sequential teleportation of $X_1(\theta)$ and $Z_1(\beta)$ can be implemented by an appropriate Pauli operator, yielding $Z_1(\theta)Z_2(\beta)\ket{s}$.
(d) Ancilla-assisted MGT for an arbitrary $n$-qubit input state $\ket{\psi}$, where one ancilla in $\ket{+}$ allows one to replace the non-Clifford feedforward operator by a Pauli $X$ operator on the ancilla.
The output is $Z^{\otimes n+1}(\theta)\ket{+\psi}$, or more generally $(Z\otimes \widetilde P)(\theta)\ket{+\psi}$ for arbitrary Pauli $\widetilde P$. (d) One can replace the usual non-Clifford feedforward $Z(2\theta)$ operator by the Clifford operator $H_{\mathrm{xy}}=(X+Y)/\sqrt{2}$ when teleporting $Z(\theta)$ on $\ket{T}$.}
    \label{fig:tft_pauli_feedforward}
\end{figure*}

\subsection{Connection to algorithmic fault tolerance}

The paradigm of algorithmic fault tolerance~\cite{zhou_algorithmic_2024,cain2025fast,serra-peralta_decoding_2026} asserts that logical circuits composed of transversal Clifford operations supplemented with $\ket T$ states to implement non-Clifford $T$ gates can be executed fault-tolerantly without performing repeated rounds of syndrome extraction between every pair of consecutive operations.
Remarkably, this paradigm applies to a broad class of QEC codes, including the surface code~\cite{Kitaev2003,Dennis2002}, rather than being limited to QEC codes with the single-shot QEC property~\cite{Bombin2015,Campbell2019,Kubica2022,Gu2024}.
As a result, it can lead to an order-of-magnitude reduction in the space and time overhead of fault-tolerant quantum computation.

The crux of algorithmic fault tolerance is that logical measurement outcomes fall into two types: they can either be reliably inferred using correlated decoding of the syndrome history, or they are independent of all previous outcomes and therefore 50/50 random.
In the latter case, the logical measurement outcome can be replaced by an unbiased coin flip.
At first, this replacement may appear problematic, as such an outcome may condition a feedforward Clifford operator; the $T$ gate teleportation protocol is the simplest illustration of this apparent challenge; see Fig.~\ref{fig:Structural}(b).
If the actual outcome is replaced by a coin toss, then our description of the state of the computation may differ from the physical quantum state by a Clifford operator.
Since a general Clifford correction cannot be absorbed into the Pauli frame, this discrepancy might seem impossible to correct later by classical postprocessing; for example, one may subsequently perform a logical measurement in the wrong basis.

We can provide a clear and intuitive resolution of this puzzle for feedforward Clifford operators arising in the $T$ gate teleportation protocol within the paradigm of algorithmic fault tolerance.
Namely, whenever a logical measurement outcome is independent and 50/50 random, the quantum state at that point can be interpreted as having the form described in Theorem~\ref{th:feedforward_reduction}.
In this case, the original feedforward Clifford operator can be replaced using an appropriate Pauli operator.
The resulting correction can therefore be accounted for by the Pauli frame update.
Consequently, even if the logical measurement outcome is incorrectly inferred at that time (due to an inconsistent coin toss), the discrepancy remains within the Pauli frame and can always be corrected later by classical postprocessing.

From a practical standpoint, our observation allows to reduce the number of feedforward Clifford operators.
For example, consider the 15-to-1 distillation protocol; see Fig.~\ref{fig:tft_pauli_feedforward}(d).
Immediately before the layer of $T$ gates, the circuit prepares a stabilizer state $\rho_{\mathrm{in}}$ whose stabilizer group includes five independent $X$-type stabilizers specified in Fig.~\ref{fig:tft_pauli_feedforward}(b).
Using Theorem~\ref{th:feedforward_reduction}, these stabilizers allow five of the fifteen feedforward $S$ gates to be replaced by Pauli operators.
For instance, the five $X$-type stabilizers can replace the feedforward for the $T$ gates on qubits 2, 4, 3, 5 and 9, respectively;
another possibility of replacing the feedforward would be for qubits 3, 2, 8, 15 and 13, justifying the observation made in the section ``State Distillation Factories'' of Ref.~\cite{zhou_algorithmic_2024}.

\subsection{Ancilla-assisted Pauli feedforward}

An interesting extension of Theorem~\ref{th:feedforward_reduction} is that one can remove the dependence on the input state by introducing a stabilizer ancilla state.
The following corollary allows to replace the feedforward operator in a deterministic MGT protocol with a Pauli feedforward operator on the ancilla system.

\begin{corollary}
Consider the resource state $\ket{\eta}$ described in Theorem~\ref{th:1}.
Let $\ket{\widetilde{s}}$ be a stabilizer state with a set of independent stabilizer generators $\{G_k\}$.
For an arbitrary set of commuting Pauli operators $\{Q_k\}$, such that for all $k$ and $\ell \neq k$
\begin{equation}
    \{G_k,Q_k\}=0
    \quad \text{and} \quad
    [G_k,Q_\ell]=0,
\end{equation}
there exists a deterministic MGT protocol that uses $\ket{\eta}$ and implements
\begin{equation}
    \widetilde U_{\bm 0}=\prod_j \widetilde P_j(\theta_j),
\end{equation}
on the state $\ketbra{\widetilde s}{\widetilde s}\otimes\rho_{\mathrm{in}}$, where $\widetilde P'_k=Q_k\otimes P'_k$ and $\widetilde P_j=\prod_{k\in\mathcal I_j}\widetilde P'_k$, with the following Pauli feedforward operators
\begin{equation}
    F_{\bm m}=\prod_{k} G_k^{m_k} \otimes I
\end{equation}
for the outcome $\bm m = \{m_k\}$.
\label{cor:ancilla-pauli-feedforward}
\end{corollary}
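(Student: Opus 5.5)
The plan is to reduce the statement to Theorem~\ref{th:1} and Theorem~\ref{th:feedforward_reduction}, applied to an enlarged input register $\ketbra{\widetilde s}{\widetilde s}\otimes\rho_{\mathrm{in}}$. We do not modify the resource state. Instead, we note that the operators $\widetilde P'_k=Q_k\otimes P'_k$ and $\widetilde P_j=\prod_{k\in\mathcal I_j}\widetilde P'_k$ carry the same commutation and product structure as $\{P'_k\}$ and $\{P_j\}$. Because the $Q_k$ commute with each other and the $P'_k$ commute with each other, the $\widetilde P'_k$ commute. They are also independent, since their restrictions $P'_k$ already are.

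First we construct the teleportation protocol. Following the proof of Theorem~\ref{th:1}, we choose the backpropagated measurements $\mathcal M=\{P'_k\otimes \widetilde P'_k\}=\{P'_k\otimes Q_k\otimes P'_k\}$. The resource register is the first factor; the input register is $\ket{\widetilde s}$ together with $\rho_{\mathrm{in}}$. Equivalently, we conjugate the measurements of Theorem~\ref{th:1} by some Clifford $C$ satisfying $CP'_kC^\dagger=\widetilde P'_k$, as in Algorithm~\ref{alg:mgt-construction}. Such a $C$ exists on the enlarged register because both families are independent and commuting; we may first adjoin the ancilla trivially and then map $I\otimes P'_k\mapsto Q_k\otimes P'_k$.

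Each $M_k$ commutes with $\mathcal P(\bm\theta)\otimes I$, so Lemma~\ref{lem:compatible-logical-action} applies. The stabilizer group after measurement contains $(-1)^{q_j}P_j\otimes\widetilde P_j$. The argument of Eq.~\eqref{eq: logical_effect} then gives the heralded gate $\widetilde U_{\bm m}=\prod_j\widetilde P_j((-1)^{q_j}\theta_j)$ acting on the full input $\ketbra{\widetilde s}{\widetilde s}\otimes\rho_{\mathrm{in}}$.

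We must also check that the measurements reveal nothing, i.e., that Eq.~\eqref{eq:state_preserving} holds. Its left-hand side contains the factor $\bra{\eta}P'_k\ket{\eta}$. This vanishes because $\ket{\eta}=\mathcal P(\bm\theta)\ket{s}$ and $\mathcal P(\bm\theta)$ commutes with $P'_k$, so $\bra{\eta}P'_k\ket{\eta}=\bra{s}P'_k\ket{s}$. The latter is zero because $\pm P'_k$ is not in the stabilizer group of $\ket{s}$. This step needs care: a product of elements of $\mathcal P$ might still stabilize $\ket{s}$ up to sign. I expect the reduction $\nu(\ket{\eta})=n$ used in the proof of Theorem~\ref{th:1}, or Lemma~\ref{lem:uniform-diagonal-state}, to rule this out. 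This is the step I would double-check most carefully.

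Second, we apply the argument of Theorem~\ref{th:feedforward_reduction} with stabilizers $G_k\otimes I$ of the enlarged input. From $\{G_k,Q_k\}=0$ and $[G_k,Q_\ell]=0$ we get $\{G_k\otimes I,\widetilde P'_k\}=0$ and $[G_k\otimes I,\widetilde P'_\ell]=0$. Hence $F_{\bm m}=\prod_k G_k^{m_k}\otimes I$ satisfies $F_{\bm m}\widetilde P_jF_{\bm m}^\dagger=(-1)^{q_j}\widetilde P_j$, and therefore $F_{\bm m}\widetilde U_{\bm m}=\widetilde U_{\bm 0}F_{\bm m}$. Since $G_k$ stabilizes $\ket{\widetilde s}$, $F_{\bm m}$ stabilizes $\ketbra{\widetilde s}{\widetilde s}\otimes\rho_{\mathrm{in}}$ for arbitrary $\rho_{\mathrm{in}}$. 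The output is thus $\widetilde U_{\bm 0}(\ketbra{\widetilde s}{\widetilde s}\otimes\rho_{\mathrm{in}})\widetilde U_{\bm 0}^\dagger$ regardless of $\bm m$, which establishes determinism with Pauli feedforward.

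The main obstacle is the first step: showing that the enlarged measurement set is a legitimate MGT measurement, meaning it is independent, commuting, reveals no information, and yields the heralded gate in the stated form. The feedforward step is a direct transcription of the proof of Theorem~\ref{th:feedforward_reduction}.
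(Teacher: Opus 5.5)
Your proposal is correct and takes essentially the paper's approach. The paper's proof is exactly your second step: it checks that $G_k\otimes I$ stabilizes $\ketbra{\widetilde s}{\widetilde s}\otimes\rho_{\mathrm{in}}$ and has the required (anti)commutation with $\widetilde P'_k$, then invokes Theorem~\ref{th:feedforward_reduction}, taking the underlying protocol for $\widetilde U_{\bm m}$ for granted where you construct it explicitly. The step you flagged closes in one line: a nontrivial product of the $P'_k$ commutes with $\mathcal P(\bm\theta)$, so if it stabilized $\ket{s}$ up to sign it would also stabilize $\ket{\eta}$, contradicting the reduction $\nu(\ket{\eta})=n$ from the proof of Theorem~\ref{th:1}.
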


\begin{proof}
Note that the input state for such MGT protocol is $\widetilde\rho_{\mathrm{in}}=\ketbra{\widetilde s}{\widetilde s}\otimes\rho_{\mathrm{in}}$. Since every $G_k$ stabilizes $\ket{\widetilde s}$, then $G_k \otimes I$ stabilizes $\widetilde\rho_{\mathrm{in}}$. Moreover, because $\widetilde P'_k=Q_k\otimes P'_k$, for all $k$ and $\ell\neq k$ we have
\begin{equation}
\{G_k\otimes I,\widetilde P'_k\}=0
\quad\text{and}\quad
[G_k\otimes I,\widetilde P'_\ell]=0.
\end{equation}
Therefore, the MGT protocol satisfies the conditions of Theorem~\ref{th:feedforward_reduction} with input state $\ketbra{\widetilde s}{\widetilde s}\otimes\rho_{\mathrm{in}}$, and the claim about Pauli feedforward operators follows immediately.
\end{proof}

Below we provide an example for the single-generator case of Corollary~\ref{cor:ancilla-pauli-feedforward}.
\begin{example}
 Let the ancilla state be $\ket{\widetilde s}=\ket{+}$. For an arbitrary $n$-qubit input state $\rho_{\mathrm{in}}$ and a resource state $Z(\theta)\ket{+}$, there is a deterministic MGT protocol for  $(Z\otimes Z^{\otimes n})(\theta)$ on $\ketbra{+}{+}\otimes\rho_{\mathrm{in}}$ with Pauli feedforward $F_m=X^m\otimes I$; see Fig.~\ref{fig:tft_pauli_feedforward}(d).
\end{example}
Usually, a deterministic MGT protocol that implements $Z^{\otimes n}(\theta)$ would require a feedforward operator $Z^{\otimes n}(2\theta)^m$, which would be non-Clifford. The ancilla-assisted protocol allows to replace this feedforward operator with the Pauli operator $X^m\otimes I$, at the price of implementing the gate $(Z\otimes Z^{\otimes n})(\theta)$ on $\ketbra{+}{+}\otimes\rho_{\mathrm{in}}$ rather than $Z^{\otimes n}(\theta)$ on $\rho_{\mathrm{in}}$. This tradeoff can be useful for logical state preparation, for example in the ansatz preparation~\cite{peruzzo_variational_2014,kandala_hardware-efficient_2017,farhi_qaoa_2014}.

\subsection{Extension to general feedforward operators}

A natural generalization of Theorem~\ref{th:feedforward_reduction} is to ask whether the same mechanism can still work when the feedforward operators are allowed to be general unitaries. 
This relaxation would enlarge the class of input states for which the feedforward operators can be replaced by potentially simpler feedforward operators. Below we provide such a generalization, which is formally identical to Theorem~\ref{th:feedforward_reduction}, except for that the operators that stabilize the input state can be any unitaries instead of Pauli operators.

\begin{corollary}
The condition on $\{G_k\}$ being Pauli operators in Theorem~\ref{th:feedforward_reduction} can be relaxed, i.e., if $\{G_k\}$ are unitary stabilizers of $\rho_{\mathrm{in}}$ satisfying the same commutation conditions with $\{P'_k\}$ in Eq.~\eqref{eq_1}, then the same feedforward replacement $F_{\bm m}=\prod_k G_k^{m_k}$ holds.
\label{cor:unitary-stabilizer-feedforward}
\end{corollary}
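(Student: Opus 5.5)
The plan is to rerun the proof of Theorem~\ref{th:feedforward_reduction} and check that Pauli-ness of the $G_k$ is never actually used. I read the commutation conditions in Eq.~\eqref{eq_1} for unitary $G_k$ as operator identities: $G_kP'_k=-P'_kG_k$ and $G_kP'_\ell=P'_\ell G_k$ for $\ell\neq k$. The stabilizing condition is $G_k\rho_{\mathrm{in}}G_k^\dagger=\rho_{\mathrm{in}}$.

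First, I will compute the conjugation of each $P_j$ by $F_{\bm m}=\prod_k G_k^{m_k}$. Each $G_k$ either commutes or anticommutes with each $P'_\ell$. Since $P_j=\prod_{\ell\in\mathcal I_j}P'_\ell$, conjugating $P_j$ by a single $G_k$ gives $G_kP_jG_k^\dagger=(-1)^{[k\in\mathcal I_j]}P_j$. Iterating over the factors of $F_{\bm m}$, in whatever order they are multiplied, then gives $F_{\bm m}P_jF_{\bm m}^\dagger=(-1)^{q_j}P_j$.

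Second, because conjugation by the unitary $F_{\bm m}$ is a $*$-automorphism, it commutes with the functional calculus. This gives $F_{\bm m}\exp(-i(-1)^{q_j}\theta_jP_j/2)F_{\bm m}^\dagger=\exp(-i\theta_jP_j/2)$. Hence $F_{\bm m}P_j((-1)^{q_j}\theta_j)=P_j(\theta_j)F_{\bm m}$, and multiplying over $j$ yields $F_{\bm m}U_{\bm m}=U_{\bm 0}F_{\bm m}$.

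Third, $F_{\bm m}$ is a product of unitaries each fixing $\rho_{\mathrm{in}}$ under conjugation, so it fixes $\rho_{\mathrm{in}}$ as well. The same final chain as in Theorem~\ref{th:feedforward_reduction} then gives $F_{\bm m}\rho_{\mathrm{out}}(\bm m)F_{\bm m}^\dagger=U_{\bm 0}\rho_{\mathrm{in}}U_{\bm 0}^\dagger$.

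The only step requiring care is the first one. For Pauli $G_k$ the sign bookkeeping is automatic, but for general unitaries one must avoid using $G_k^2\propto I$ or any Pauli-group commutation between different $G_k$. The identity $F_{\bm m}P_jF_{\bm m}^\dagger=(-1)^{q_j}P_j$ follows from sequential conjugation alone, since each $G_k$ maps $P'_\ell$ to $\pm P'_\ell$. Products of the $G_k$ are never reordered, and the $G_k$ need not commute with one another. The $q_j$ is understood mod $2$, matching Eq.~\eqref{eq: heralded_gate}.
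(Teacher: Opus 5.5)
Your proposal is correct and follows the paper's approach. The paper omits the proof, saying it is literally the same as that of Theorem~\ref{th:feedforward_reduction}. Your three steps are exactly that argument: the sign identity $F_{\bm m}P_jF_{\bm m}^\dagger=(-1)^{q_j}P_j$ obtained by sequential conjugation, its transfer to the rotations $P_j(\theta_j)$, and the invariance of $\rho_{\mathrm{in}}$ under $F_{\bm m}$, together with your check that the Pauli property of the $G_k$ and their mutual commutation are never used.
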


We omit the proof here as the proof strategy is literally the same as Theorem~\ref{th:feedforward_reduction}. We then consider the following example.
\begin{example}
Consider a deterministic MGT protocol that uses $Z(\theta)\ket{+}$ and implements $Z(\theta)$ gate. If the input state is $\rho_{\mathrm{in}}=\ketbra{T}{T}$, then the feedforward operator $Z^m(2\theta)$ can be replaced by a Clifford feedforward operator $H^m_{\mathrm{xy}}$, where $H_{\mathrm{xy}}= \frac{X+Y}{\sqrt{2}}$; see Fig.~\ref{fig:tft_pauli_feedforward}(e).
\end{example}
The state $\ket{T}$ is not stabilized by any Pauli operator that anticommutes with $Z$, so Theorem~\ref{th:feedforward_reduction} does not apply. However, $\ket{T}$ is stabilized by the Clifford operator $H_{\mathrm{xy}}=(X+Y)/\sqrt{2}$, which satisfies $H_{\mathrm{xy}}\ket{T}=\ket{T}$ and $\{H_{\mathrm{xy}},Z\}=0$. Therefore, in a deterministic MGT protocol that implements $Z((-1)^m\theta)$ conditioned on the outcome $m$, applying $H^m_{\mathrm{xy}}$ as the feedforward operator results in a deterministic MGT protocol for $Z(\theta)$ on input $\rho_{\mathrm{in}}=\ketbra{T}{T}$; see Fig.~\ref{fig:tft_pauli_feedforward}(e).

\section{Discussion}

In this work, we analyzed the structure of and the constraints on MGT.
We showed that not every nonstabilizer state is computationally useful if one requires MGT protocols to reveal no information about the input state, a requirement that is particularly relevant in the context of long quantum computations.
We also identified sufficient conditions under which the feedforward operator in an MGT protocol can be implemented as a Pauli operator.
This, in turn, can simplify certain state distillation protocols by allowing the feedforward operators to be absorbed into the Pauli frame update.

Our work can be extended in several directions.
First, generalizing Theorem~\ref{th:multiqubit-teleportable} to arbitrary MGT protocols would, together with Theorem~\ref{th:1}, provide necessary and sufficient conditions for multiqubit resource states to be useful for MGT.
Second, from a practical standpoint, it would be important to design and optimize circuit implementation of MGT that take full advantage of readily available, albeit imperfect, quantum hardware.
Third, more exotic resource states naturally give rise to strategies beyond compilation into Clifford and $T$ gates, reminiscent of the STAR architecture~\cite{akahoshi_partially_2024}.
It would be interesting to investigate in detail how different gates can be used to reduce compilation overhead, and whether complex resource states can be prepared to implement multiple gates simultaneously, thereby reducing the overall space and time overhead of quantum computation.

\begin{acknowledgments}
Y.Z. acknowledges useful discussions with Charles Cao, Mark Howard, Liang Jiang and Adam Paetznick.
This research was sponsored by the NSF (QLCI, Award No. OMA-2120757) and the U.S. Army Research Office (ARO) under grant W911NF-23-1-0051.
The views and conclusions contained in this document are those of the authors and should not be interpreted as representing the official policies, either expressed or implied, of the ARO or the U.S. Government.

\end{acknowledgments}

\clearpage

\appendix
\section{Constructing decoding circuits for stabilizer codes}
\label{app:gottemans_algorithm}

An $[\![n,k]\!]$ stabilizer code can be specified by a symplectic matrix $H' = [H'_{\mathrm{X}}|H'_{\mathrm{Z}}]$, where both $H'_{\mathrm{X}}$ and $H'_{\mathrm{Z}}$ are defined on $\mathbb{F}^{\, (n-k)\times n}_2$. By Gaussian elimination, the parity check matrix can be brought into the standard form
\begin{equation}
    H= [H_{\mathrm{X}}|H_{\mathrm{Z}}] =\begin{bmatrix}[ccc|ccc]
        I & A_1 & A_2 & B & 0 & C \\
        0 & 0 & 0 & D & I & E
    \end{bmatrix},
    \label{eq:standard_form}
\end{equation}
where the two identity blocks have row sizes $r=\mathrm{rank}(H_{\mathrm{X}})$ and $n-k-r$, respectively, and the three column blocks have sizes $r$, $n-k-r$, and $k$. Once the parity-check matrix is in standard form, the logical operators can be read off immediately:
\begin{align}
      &L_{\mathrm{X}} =  \begin{bmatrix}[ccc|ccc]
        0&E^T&I&C^T & 0 & 0
    \end{bmatrix},\\
    &L_{\mathrm{Z}} =     \begin{bmatrix}[ccc|ccc]
        0&0&0&A_2^T & 0 & I
    \end{bmatrix}.
\end{align}
Each row corresponds to one independent logical operator. The encoding circuit $W_{\mathrm{enc}}$ for the standard form of $\mathcal{Q}$ can be efficiently evaluated from the standard form matrices using Gottesman's algorithm~\cite{gottesman_stabilizer_1997}, and the decoding circuit $V$ is then obtained through $V=W^\dagger_{\mathrm{enc}}$. We now only consider the construction of the encoding circuit. The following Algorithm~\ref{alg:gottesman} gives an explicit way to construct $W_{\mathrm{enc}}$ for stabilizer code in the standard form.

The logic of Algorithm~\ref{alg:gottesman} is straightforward. $W_{\mathrm{enc}}$ should map single-qubit Pauli $Z$ operators on the first $n-k$ qubits to stabilizer generators defined via $H$, and single-qubit Pauli $X(Z)$ on the last $k$ qubits to logical $X(Z)$ operators defined via $L_X(L_Z)$. The first loop uses the support of the logical $X$ operators $L_{\mathrm{X}}$ to couple each physical qubit corresponding to logical qubit to the physical qubits on which the corresponding encoded logical $X$ must act. In this way, the single-qubit Pauli $X$ operators are propagating to the correct logical-$X$ operators. The second loop then propagate the stabilizers. Applying a Hadamard gate on each qubit maps Pauli $Z$s to Pauli $X$s, after which the $CX$ gates specified by $H_{\mathrm{X}}$ propagate these $X$ operators onto the stabilizers in the first $r$ rows supported on the full code block. The $CZ$ gates determined by $H_{\mathrm{Z}}$ similarly propagate $Z$ operators onto the remaining stabilizers in the last $n-k-r$ rows. Lastly, the $S$ gates take care of stabilizers with Pauli $Y$ operators.

\begin{center}
\begin{minipage}{0.95\columnwidth}
\hrule
\vspace{0.6ex}
\refstepcounter{algorithm}
\noindent {\bf{Algorithm~\thealgorithm.}} Constructing an encoding circuit $W_{\mathrm{enc}}$
\hrule
\label{alg:gottesman}
\vspace{0.4ex}
\begin{algorithmic}[1]
\Require $H_{\mathrm{X}}, H_{\mathrm{Z}}  \in \mathbb{F}_2^{\,(n-k)\times k}$,  $L_{\mathrm{X}} \in \mathbb{F}_2^{\, k\times 2n}$
\Ensure A quantum circuit on $n$ qubits

\For{$j = 1$ to $k$}
    \For{$\ell = 1$ to $n$}
        \If{$L_{\mathrm{X}}[j,\ell] = 1$}
            \State Append $\mathrm{CX}(n-k+j,\ell)$
        \EndIf
    \EndFor
\EndFor
\For{$j = 1$ to $r$}
    \State Append Hadamard gate $H$ on qubit $j$
    \For{$\ell = 1$ to $n$}
        \If{$H_{\mathrm{X}}[j,\ell] = 1$}
            \State Append $\mathrm{CX}(j,\ell)$
        \EndIf
        \If{$H_{\mathrm{Z}}[j,\ell] = 1$}
            \If{$\ell \neq j$}
                \State Append $\mathrm{CZ}(j,\ell)$
            \EndIf
        \EndIf
    \EndFor
    \If{$H_{\mathrm{Z}}[j,j] = 1$}
        \State Append phase gate $S$ on qubit $j$
    \EndIf
\EndFor
\end{algorithmic}
\vspace{0.6ex}
\hrule
\end{minipage}
\end{center}

Once the encoding circuit $W_{\mathrm{enc}}$ for the standard-form representation $H$ has been obtained, the encoding circuit $W'_{\mathrm{enc}}$ for the original representation $H'$ can be obtained by undoing the Gaussian elimination. The elementary matrix operation have direct circuit counterparts: Adding row $j$ to row $k$ can be implemented by a $CX$ controlled on qubit $j$ and targeted on qubit $k$, and swap column $j$ with column $k$ correspond to $\operatorname{SWAP}$ gates between qubit $j$ and qubit $k$. By tracking these operations, we can construct a Clifford circuit $W_{\mathrm{ge}}$ corresponding to the Gaussian-elimination step. Therefore an encoding circuit for the original representation is
\begin{equation}
    W'_{\mathrm{enc}} = W^\dagger_{\mathrm{ge}} W_{\mathrm{enc}},
\end{equation}
and the decoding circuit associated with $H'$ is then given by $V = W'^\dagger_{\mathrm{enc}}$.

\bibliographystyle{apsrev4-2}
\bibliography{ref.bib}
\end{document}